\newcommand{\var}{\text{Var}}
\theoremstyle{plain}
\newtheorem*{proof*}{Proof}
\newtheorem{corollary}{Corollary}
\title{Inference in experiments conditional on observed imbalances in covariates\thanks{We are grateful for comments and suggestions from seminar participants at IFAU and Stockholm University.}}
\author{Per Johansson\thanks{Uppsala University, IFAU, Tsinghua University} and Mattias Nordin\thanks{Uppsala University, UCFS}}
\begin{document}
\onehalfspace

\maketitle

\begin{abstract}
\noindent Double blind randomized controlled trials are traditionally seen as the gold standard for causal inferences as the difference-in-means estimator is an unbiased estimator of the average treatment effect in the experiment. The fact that this estimator is unbiased over all possible randomizations does not, however, mean that any given estimate is close to the true treatment effect. Similarly, while pre-determined covariates will be balanced between treatment and control groups on average, large imbalances may be observed in a given experiment and the researcher may therefore want to condition on such covariates using linear regression. This paper studies the theoretical properties of both the difference-in-means and OLS estimators \emph{conditional} on observed differences in covariates. By deriving the statistical properties of the conditional estimators, we can establish guidance for how to deal with covariate imbalances. We study both inference with OLS, as well as with a new version of Fisher's exact test, where the randomization distribution comes from a small subset of all possible assignment vectors.
\end{abstract}

\section{Introduction}

Double blind randomized controlled trials (RCT) are traditionally seen as the gold standard for causal inferences as it provides probabilistic inference of the unbiased difference-in-means estimator under no model assumption (cf. \cite{freedman_regression_2008}). This concept of unbiasedness of an estimator is however often misunderstood as the estimate being ``the truth'' \citep[cf.][]{Deaton&Cartwright:2018}. In a single experiment the estimate may still be very far from the true effect due to an, unfortunate, bad treatment assignment.

The reason for the unique position of the RCT in the research community is that it provides an objective and transparent strategy for conducting an empirical study, not necessarily that it is most efficient way of scientific learning.\footnote{The conflict between the use of RCT, propagated by Ronald Fisher, and other strategies was discussed early on, see e.g. \citet{Student:1938}. The paper was published, with the help of Egon Pearson and Jerzy Neyman, after the death of Gosset in 1937. For an interesting discussion on decision theory and the motivation for randomization see \citet{Banerjee&etal:2017}.} To facilitate the transparency, it is common practice in scientific journals that researchers present imbalances of pre-experimental covariates of the treated and controls, typically showing the means and standard deviations of these covariates. Of course, as pointed out by \cite{mutz_perils_2019}, if one knows that treatment is randomly assigned, there is no such thing as a ``failed'' randomization (in a completely randomized design, any treatment assignment is possible) which means that any large imbalance in observed covariates does not necessitate any further action.

Indeed, \cite{mutz_perils_2019} argue that by studying balance on observed covariates, researchers run the risk of making their results \emph{less} credible as researchers may be temped to adjust for observed imbalances, which compromises the inference. By doing so, they may also estimate several different models, raising the concern of ``p-hacking''. At the same time, removing descriptive tables of balances between treated and controls does not seem to be possible given that the transparency of the research design is an important reason for using an RCT.

Furthermore, while it is true that the difference-in-means estimator is an unbiased estimator over all possible randomizations, this fact may be of little solace to the applied researcher who have conducted an experiment in which he/she have observed imbalances, as imbalances may indicate that the estimate is far from the true value. 

In this paper, we provide a framework for conditional inference that are not compromised by conditioning on covariates. We derive the distributions of different treatment effect estimators conditional on covariate imbalances to establish guidance for how to deal any observed imbalances. We also discuss how to perform both Neyman-Pearson and Fisher tests which give correct inferences regardless of whether covariate imbalances have been observed. 

Different from \cite{mutz_perils_2019}, who considers inference to the population conditional on imbalance in a single covariate, we consider randomization inference to the sample conditional on observed imbalances in a vector of covariates. By focusing on randomization inference, i.e., that the stochasticity comes random treatment assignment rather than random sampling, we follow, among others, \cite{freedman_regression_2008}, \cite{cox_randomization_2009} and \cite{lin_agnostic_2013} who study unconditional inference to the sample.\footnote{See also \cite{miratrix_adjusting_2013}, who study conditional inference when using post-stratification.} We consider both homogeneous and heterogeneous treatment effects and show that when explanatory covariates are imbalanced, the difference-in means estimator is biased while the conditional OLS estimator is close to unbiased. The variance of the conditional OLS estimator is increasing with the imbalance of the covariates and in the number of covariates. Thus, in an experiment there is a trade-off between bias and variance reduction in how many covariates to adjust for, and the trade-off depends on the imbalance of the covariates as well as the importance of the covariates in explaining the outcome. 



In situations with a large set of covariates relative to the sample size we provide algorithms for covariate-adjustments that do not suffer from the pitfalls pointed out by \cite{mutz_perils_2019}, where the procedures make use of the principal components of the covariates. Based on the imbalance of these principal components, the number of components to adjust for is chosen such that randomized inference can be justified. The algorithm for the conditional Fisher randomization test samples a sufficient number of similar treatment assignments and perform a Fisher test within that set.

The paper proceeds by presenting the theoretical justification for conditional inference under homogeneous treatment effects in the next section with Section 3 illustrating these results. Section 4 discusses the problem with a large set of covariates in comparison to sample size and presents the different algorithms together with Monte Carlo simulation results. In Section 5, we study the case with heterogeneous treatment effects both theoretically and with Monte Carlo simulations. Section 6 concludes the paper.

\section{Theoretical framework}
Consider a RCT with $n$ units in the sample, indexed by $i$, with $n_{1}$ to
be assigned to treatment and $n_{0}$ to be assigned to control. Let $W_{i}=1$
or $W_{i}=0$ if unit $i$ is assigned treatment or control, respectively, and
define $\mathbf{W}=\left[\begin{array}{ccc}W_{1}&\ldots&W_{n}\end{array}\right]^{\prime }.$ The set $%
\mathcal{W}=\{\mathbf{W}^{1},\ldots,\mathbf{W}^{n_A}\}$ contains all possible assignment vectors and has cardinality $|\mathcal{W}|=\tbinom{n}{n_{1}}=n_A$.

Let $Y_{i}(w)$ denote the potential outcome for unit $i$ given the
treatment ($w=1)$ and control ($w=0)$. We assume no interference
between individuals and the same treatment (i.e. SUTVA)\ which means that
the observed outcome is $Y_{i}\equiv Y(W_{i}).$ The estimand of interest is the sample average treatment effect defined as
\begin{equation}
  	\tau = \frac{1}{n}\sum_{i=1}^n Y_i(1) - Y_i(0).
\end{equation} 
The difference-in-means estimator is
\begin{equation}
\widehat{\tau }_{DM}=\overline{Y}_{1}-\overline{Y}_{0},  \label{Estimator}
\end{equation}%
where 
\[
\overline{Y}_{w}=\frac{1}{n_{w}}\sum_{i:W_{i}=w}^{n_{w}}Y_{i},\quad w=0,1.
\]

Let $\mathbf{Z}$ be the $n\times K$ matrix of fixed covariates in the sample. Define the linear projection
in the sample 
\begin{equation}
Y_{i}(0)=\alpha + \mathbf{z}_{i}^{\prime }\bm{\upbeta }+\varepsilon _{i},
\label{Model}
\end{equation}%
where $\varepsilon _{i}$ is a fixed residual. Define $%
\overline{\mathbf{z}}_{w}=\frac{1}{n_{w}}\sum_{i:W_{i}=w}^{n_{w}}\mathbf{z}%
_{i}$ and $\overline{\mathbf{z}}=\frac{1}{n}\sum_{i=1}^{n}%
\mathbf{z}_{i}$. The difference-in-means estimator can be written as
\[
\widehat{\tau }_{DM}=\overline{Y}_{1}-\overline{Y}_{0}=\tau+(\overline{\mathbf{z}}%
_{1}-\overline{\mathbf{z}}_{0})^{\prime }\bm{\upbeta}+\overline{%
\varepsilon }_{1}-\overline{\varepsilon }_{0},
\]%
where $\overline{\varepsilon }_{w}=\frac{1}{n_{w}}\sum_{i:W=w}^{n_{w}}%
\varepsilon _{i}.$ As $W$ is random, both $\overline{\mathbf{z}%
}_{w}$ and $\overline{\varepsilon }_{w}$ are random even though $\mathbf{Z}$
and $\bm{\upvarepsilon}$ are fixed.

Let $E_{\mathcal{W}}(\cdot)$ and $V_{\mathcal{W}}(\cdot)$ denote expectation and variance over
randomizations in the set $\mathcal{W}$, and $\overline{Y(0)}=\frac{1}{n}\sum_{i=1}^n Y_i(0)$. It is the case that
\begin{equation}
	E_{\mathcal{W}}(\widehat{\tau }_{DM})=\tau,
\end{equation}
and
\begin{align}
V_{\mathcal{W}}(\widehat{\tau }_{DM}) &=\bm{\upbeta}^{\prime }V_{\mathcal{W}}(\overline{%
\mathbf{z}}_{1}-\overline{\mathbf{z}}_{0})^{\prime }\bm{\upbeta} +V_{\mathcal{W}}(%
\overline{\varepsilon }_{1}-\overline{\varepsilon }_{0})  \nonumber \\
&=\frac{n}{n_{0}n_{1}}\bm{\upbeta}^{\prime }%
\left(\widetilde{\mathbf{Z}}^{\prime }\widetilde{\mathbf{Z}}/(n-1)\right)\bm{\upbeta}+\frac{n}{%
n_{0}n_{1}} \frac{1}{n-1}\sum_{i=1}^n \varepsilon_i^2
 \nonumber \\
 & =\frac{n}{n_{0}n_{1}}\frac{1}{n-1}\sum_{i=1}^n (Y_i(0)-\overline{Y(0)})^2,
\label{VarDM}
\end{align}%
where $V_{\mathcal{W}}(\overline{\mathbf{z}}_{1}-\overline{\mathbf{z}}_{0})=V_{\mathcal{W}}(%
\mathbf{z})=\mathbf{\Sigma }_{z}$, $\widetilde{\mathbf{Z}}=\mathbf{Z}-\overline{\mathbf{z}}^{\prime }$ and 
$\widetilde{\mathbf{Y}}\mathbf{(0)}=\mathbf{Y(0)}-\overline{Y(0)},$ and%
\begin{equation}
\bm{\upbeta =}(\widetilde{\mathbf{Z}}^{\prime }\widetilde{\mathbf{Z}})^{-1}%
\widetilde{\mathbf{Z}}^{\prime }\widetilde{\mathbf{Y}}\mathbf{(0)}.
\end{equation}%
Note that $\widetilde{\mathbf{Z}}^{\prime }\widetilde{\mathbf{Z}}$ is
observed in data$.$ $\bm{\upbeta }$ is however not observed as $%
Y_{i}(0)$ is not observed if $W_{i}=1$.

We are interested in the stochastic properties of the difference-in-means estimator when $\overline{\mathbf{z}}_{1}\mathbf{-}\overline{\mathbf{z}}_{0}$ is held at some fixed value. Let $\mathcal{W}_{\bm{\Delta}} \subseteq \mathcal{W}$ be the set of assignments for which $\overline{\mathbf{z}}_{1}\mathbf{-}\overline{\mathbf{z}}_{0}=\bm{\Delta}$. $E_{\mathcal{W}_{\bm{\Delta}}}(\cdot)$ and $V_{\mathcal{W}_{\bm{\Delta}}}(\cdot)$ denote expectation and variance over randomizations in this set. We have
\begin{equation}
E_{\mathcal{W}_{\bm{\Delta}} }(\widehat{\tau }_{DM}) =\tau +\bm{\Delta}^{\prime }\bm\upbeta +E_{\mathcal{W}_{\bm{\Delta}} }(\overline{%
\varepsilon }_{1}-\overline{\varepsilon }_{0}), \label{eq:exp_dm}
\end{equation}%
and
\begin{equation}
	V_{\mathcal{W}_{\bm{\Delta}} }(\widehat{\tau }_{DM}) =V_{\mathcal{W}_{\bm{\Delta}} }(\overline{%
	\varepsilon }_{1}-\overline{\varepsilon }_{0}).\label{eq:var_cond_dm}
\end{equation}
Note that we cannot in general say that $E_{\mathcal{W}_{\bm{\Delta}} }(\overline{%
\varepsilon }_{1}-\overline{\varepsilon }_{0})=0$, and it is also the case that $V_{\mathcal{W}_{\bm{\Delta}} }(\overline{%
	\varepsilon }_{1}-\overline{\varepsilon }_{0})$ is not a constant, but depend on $\bm{\Delta}$. In the Appendix, we derive the explicit formula for $E_{\mathcal{W}_{\bm{\Delta}} }(\overline{%
\varepsilon }_{1}-\overline{\varepsilon }_{0})$ and $V_{\mathcal{W}_{\bm{\Delta}} }(\overline{\varepsilon }_{1}-\overline{\varepsilon }_{0})$ when $\mathbf{Z}$ consists of a single dummy variable. We there show that, in a balanced experiment, the variance is at its maximum when $\bm\Delta=0$ and decreases symmetrically as the magnitude of $\bm\Delta$ increases.


Let $\overline{Y}=\frac{1}{n}\sum_{i=1}^n Y_i$, it is helpful to define $\widetilde{\mathbf{Y}}=\mathbf{Y}-\overline{Y}$, $\widetilde{\mathbf{%
W}}=\mathbf{W}-n_1/n$ and $\widetilde{\mathbf{M}}_{z}=\mathbf{I}-\widetilde{%
\mathbf{Z}}\mathbf{(\widetilde{\mathbf{Z}}}^{\prime }\widetilde{\mathbf{Z}}%
\mathbf{)}^{-1}\widetilde{\mathbf{Z}}^{\prime }$. The OLS\
estimator of the treatment effect from regressing the outcome on a treatment indicator, controlling for the covariates $\mathbf{Z}$, is
\begin{equation}
\widehat{\tau }_{z}=\mathbf{(\widetilde{\mathbf{W}}^{\prime }\widetilde{%
\mathbf{M}}}_z\widetilde{\mathbf{W}}\mathbf{)}^{-1}\widetilde{\mathbf{W}}%
^{\prime }\widetilde{\mathbf{M}}_z \widetilde{\mathbf{Y}},
\end{equation}
where 
\begin{equation}
	\widetilde{\mathbf{Y}} = \widetilde{\mathbf{Z}}\bm\upbeta + \widetilde{\mathbf{W}} \tau + \bm\upvarepsilon.
\end{equation}
Because $\widetilde{\mathbf{W}}^{\prime }\widetilde{\mathbf{M}}_z \widetilde{\mathbf{Z}}\bm\upbeta = \mathbf{0}$, we have

\begin{equation}
	\widehat{\tau }_{z} = \tau + \mathbf{(\widetilde{\mathbf{W}}^{\prime }\widetilde{%
\mathbf{M}}}_z\widetilde{\mathbf{W}}\mathbf{)}^{-1}\widetilde{\mathbf{W}}%
^{\prime }\widetilde{\mathbf{M}}_z \mathbf{\bm\upvarepsilon}.
\end{equation}

The numerator of the OLS estimator can be written as
\begin{align}
	\widetilde{\mathbf{W}}^{\prime }\widetilde{\mathbf{M}}_z \bm\upvarepsilon = \widetilde{\mathbf{W}}^{\prime }\bm\upvarepsilon - \widetilde{\mathbf{W}}^{\prime }\widetilde{\mathbf{Z}}\mathbf{(\widetilde{\mathbf{Z}}}^{\prime }\widetilde{\mathbf{Z}}%
	\mathbf{)}^{-1}\widetilde{\mathbf{Z}}^{\prime }\bm\upvarepsilon=\widetilde{\mathbf{W}}^{\prime }\bm\upvarepsilon =\frac{n_0n_1}{n} (\overline{\upvarepsilon}_1 - \overline{\upvarepsilon}_0),
\end{align}
as $\widetilde{\mathbf{Z}}^{\prime }\bm\upvarepsilon=\mathbf{0}$. The denominator of the OLS estimator can be written as
\begin{align}
	\widetilde{\mathbf{W}}^{\prime }\widetilde{\mathbf{M}}_z\widetilde{\mathbf{W}} &= 
	\widetilde{\mathbf{W}}^{\prime }(\mathbf{I}-\widetilde{%
	\mathbf{Z}}\mathbf{(\widetilde{\mathbf{Z}}}^{\prime }\widetilde{\mathbf{Z}}%
	\mathbf{)}^{-1}\widetilde{\mathbf{Z}}^{\prime })\widetilde{\mathbf{W}} \nonumber \\
	&= \widetilde{\mathbf{W}}^{\prime }\widetilde{\mathbf{W}} - \widetilde{\mathbf{W}}^{\prime }\widetilde{\mathbf{Z}}\mathbf{(\widetilde{\mathbf{Z}}}^{\prime }\widetilde{\mathbf{Z}}%
	\mathbf{)}^{-1}\widetilde{\mathbf{Z}}^{\prime }\widetilde{\mathbf{W}} \nonumber \\
	&= \frac{n_0n_1}{n} \left( 1 - \frac{n_0n_1}{n}\bm\Delta'(\widetilde{\mathbf{Z}'}\mathbf{\widetilde{\mathbf{Z}}})^{-1}\bm\Delta \right). 
\end{align}

Let $M_{\bm{\Delta}}:=\frac{n_0n_1}{n}\bm{\Delta}'(\widetilde{\mathbf{Z}'}\mathbf{\widetilde{\mathbf{Z}}}/(n-1))^{-1}\bm{\Delta}$ be the Mahalanobis distance between treatment and control in $\mathbf{Z}$. We get
\begin{equation}
	\widehat{\tau }_{z}=\mathbf{(\widetilde{\mathbf{W}}^{\prime }\widetilde{%
\mathbf{M}}}_z\widetilde{\mathbf{W}}\mathbf{)}^{-1}\widetilde{\mathbf{W}}%
^{\prime }\widetilde{\mathbf{M}}_z \widetilde{\mathbf{Y}} = \tau +  
\frac{\overline{\upvarepsilon}_1 - \overline{\upvarepsilon}_0}{ 1 - M_{\bm{\Delta}}/(n-1)}.
\end{equation}

Over all assignment vectors in $\mathcal{W}$, it is the case that $E_{\mathcal{W} }\left( \overline{\upvarepsilon}_1 - \overline{\upvarepsilon}_0 \right)=0$, and so
\begin{equation}
	E_{\mathcal{W}}(\widehat{\tau }_{z})=\tau.
\end{equation}
I.e., the OLS estimator is an unbiased estimator over all assignment vectors when treatment effects are homogeneous. The conditional expectation becomes
\begin{equation}
	E_{\mathcal{W}_{\bm{\Delta}} }(\widehat{\tau }_z)=\tau +  
\frac{E_{\mathcal{W}_{\bm{\Delta}} }(\overline{\upvarepsilon}_1 - \overline{\upvarepsilon}_0)}{ 1 - M_{\bm{\Delta}}/(n-1)},\label{eq:cond_e_reg_exact}
\end{equation}
with the variance being
\begin{equation}
	V_{\mathcal{W}_{\bm{\Delta}} }(\widehat{\tau }_{z}) = \frac{V_{\mathcal{W}_{\bm{\Delta}} }(\overline{%
	\varepsilon }_{1}-\overline{\varepsilon }_{0})}{ \left(1 - M_{\bm{\Delta}}/(n-1)\right)^2}.\label{eq:var_cond_reg}
\end{equation}
From equations \eqref{eq:var_cond_dm} and \eqref{eq:var_cond_reg}, we get
\begin{equation}
	\frac{V_{\mathcal{W}_{\bm{\Delta}} }(\widehat{\tau }_{DM})}{V_{\mathcal{W}_{\bm{\Delta}} }(\widehat{\tau }_{z})} = \left(1 - M_{\bm{\Delta}}/(n-1)\right)^2.\label{eq:var_cond_diff}
\end{equation}
This equation implies that for $\bm\Delta = 0$, the variance of the two estimators are identical. As the Mahalanobis distance increases, the variance of the difference-in-means estimator gets relatively smaller compared to the variance of the OLS estimator.

The conditional MSE for the difference-in-means estimator is
\begin{equation}
	MSE_{\mathcal{W}_{\bm{\Delta}} }(\widehat{\tau }_{DM}) =\bm\upbeta'\bm{\Delta}\bm{\Delta}^{\prime }\bm\upbeta + E_{\mathcal{W}_{\bm{\Delta}} }\left((\overline{%
 		\varepsilon }_{1}-\overline{\varepsilon }_{0})^2\right) + \bm{\Delta}^{\prime }\bm\upbeta E_{\mathcal{W}_{\bm{\Delta}} }(\overline{%
 		\varepsilon }_{1}-\overline{\varepsilon }_{0}),
\end{equation}
whereas the conditional MSE for the OLS estimator is
\begin{equation}
 	MSE_{\mathcal{W}_{\bm{\Delta}} }(\widehat{\tau }_z) =
 	\frac{E_{\mathcal{W}_{\bm{\Delta}} }\left((\overline{%
 		\varepsilon }_{1}-\overline{\varepsilon }_{0})^2\right)}{ \left(1 - M_{\bm{\Delta}}/(n-1)\right)^2}.
\end{equation} Therefore, we have that the conditional MSE of the difference-in-means estimator is greater than the conditional MSE of the OLS estimator if
\begin{equation}
	\bm\upbeta'\bm{\Delta}\bm{\Delta}^{\prime }\bm\upbeta + \bm{\Delta}^{\prime }\bm\upbeta E_{\mathcal{W}_{\bm{\Delta}} }(\overline{%
 		\varepsilon }_{1}-\overline{\varepsilon }_{0}) > E_{\mathcal{W}_{\bm{\Delta}} }\left((\overline{\varepsilon }_{1}-\overline{\varepsilon }_{0})^2\right)\left(\frac{1}{\left( 1 - M_{\bm{\Delta}}/ (n-1)\right)^{2}}-1\right).\label{eq:comp_dm_ols}
\end{equation}
Note that, once again, for $\bm\Delta=\mathbf{0}$, the MSE for the difference-in-means and OLS estimators are identical. For $\bm\Delta\neq\mathbf{0}$, as the sample size increases, $M_{\bm{\Delta}}/ (n-1)=r^2$ (the R-squared from the regression of $W$ on $\mathbf{Z}$) will tend to zero and the OLS estimator will always be more efficient as long as the covariates are relevant ($\bm\upbeta \neq \mathbf{0})$.

At this point, it is helpful to compare the expression in equation \eqref{eq:comp_dm_ols} with Theorem 1 in \cite{mutz_perils_2019}. To do so, we restrict attention to the case with only one covariate, $Z$. After noting that $\Delta^2=r^2\var(Z)/\var(W)$ and rearranging, we can get
\begin{equation}
	\frac{\beta^2\var(Z)}{\var(W)} + \Delta\beta E_{\mathcal{W}_{\bm{\Delta}} }(\overline{%
 		\varepsilon }_{1}-\overline{\varepsilon }_{0}) > E_{\mathcal{W}_{\bm{\Delta}} }\left((\overline{\varepsilon }_{1}-\overline{\varepsilon }_{0})^2\right)\left(\frac{1}{ 1 - r^2}\right).
\end{equation}
The second term on the left-hand side disappears over random sampling. After replacing $\beta$ and $\varepsilon$ with structural parameters in a population model,\footnote{The model in \cite{mutz_perils_2019} can be written as $Y=\mu+\tau W + \gamma Z + \theta \xi$, where $\xi$ is an error term with unit variance.} it is possible to show that the inequality can be written as in \cite{mutz_perils_2019}:
\begin{equation}
	\frac{\gamma^2\var(Z)}{\theta^2} > \frac{1}{(n-1)(1-r^2)},
\end{equation}
where $\gamma$ is the effect of $Z$ on $Y$ and $\theta^2$ is the variance of the error term in the population model. \cite{mutz_perils_2019} use this equation to argue that, when a covariate is not very informative in explaining the outcome, one should be less inclined to control for $Z$ if $r^2$ is high. The reason if that an increase in $r^2$ increases the right-hand side of the equation. However, it is important to note that $\var(Z)$ is the sample variance and not the population variance. Over random sampling, an increase in $r^2$ may also increase $\var(Z)$ enough such that the inequality is \emph{more} likely to hold if $r^2$ increases. In general, it is not possible to say that we should be more or less inclined to control for covariates if we observe imbalances in covariates.

As noted by \cite{mutz_perils_2019}, conditional on $r^2$, the difference-in-means estimator is conditionally unbiased. This results hold exactly for any given sample. The reason is that the set containing all treatment assignments with a given R-squared, $\mathcal{W}_{r^2}$, must necessarily contain the mirrors of all assignments in the set. However, as shown in equation \eqref{eq:exp_dm}, this is not the case when conditioning on $\Delta$, the observed imbalance, which is what is typically shown in a table of balance tests. For instance, suppose one is interested in analyzing the effect of a vaccine in a randomized controlled trial, and there is a suspicion that the vaccine will be less effective among older individuals. If an imbalance is observed, such that the treatment group contains individuals that are on average one year older than the control group, it is not very helpful to note that the difference-in-means estimator is unbiased conditional on the treatment group containing individuals that are either one year older or one year younger than the control group. Instead, it makes sense to say that the difference-in-means estimator is biased conditional on the treatment group being one year older than the control group.







\section{Illustration \label{sec:illustration}}
To illustrate the results in the previous section, we perform a very simple simulation study where data is generated as $Y_i(0)= Z_i + u_i$ and $\tau=0$. To make it possible to go through all $n_A=\binom{n}{n_1}$ treatment assignments, we let $n=20$ and $n_1=10$. Both $Z$ and $u$ are drawn from a standard normal distribution. In total, we draw 1,000 samples and for each sample, we go through all $n_A=\binom{20}{10}= 184,756$ possible assignment vectors and calculate both $\hat\tau_{DM}$ and $\hat\tau_z$ for each of these vectors. In addition, we calculate the size of the statistical tests (conditional on $\bm\Delta$) as well as the conditional variance and MSE.

Figure \ref{fig:combined50} illustrates the results. Focusing on the point estimates, we see that the OLS estimator is conditionally unbiased (which means that over random sampling, $E_{\mathcal{W}_{\bm{\Delta}} }(\overline{\upvarepsilon}_1 - \overline{\upvarepsilon}_0)=0$), whereas the difference-in-means estimator is conditionally biased. Equation \eqref{eq:cond_e_reg_exact} implies that the bias should be $\bm\Delta'\bm\upbeta$ which in this case means that the bias should follow the 45 degree line. Indeed that is what is found. As we know should be the case, the unconditional expectations of the estimators are both exactly zero for each sample.

\begin{figure}
	\includegraphics[width=\linewidth]{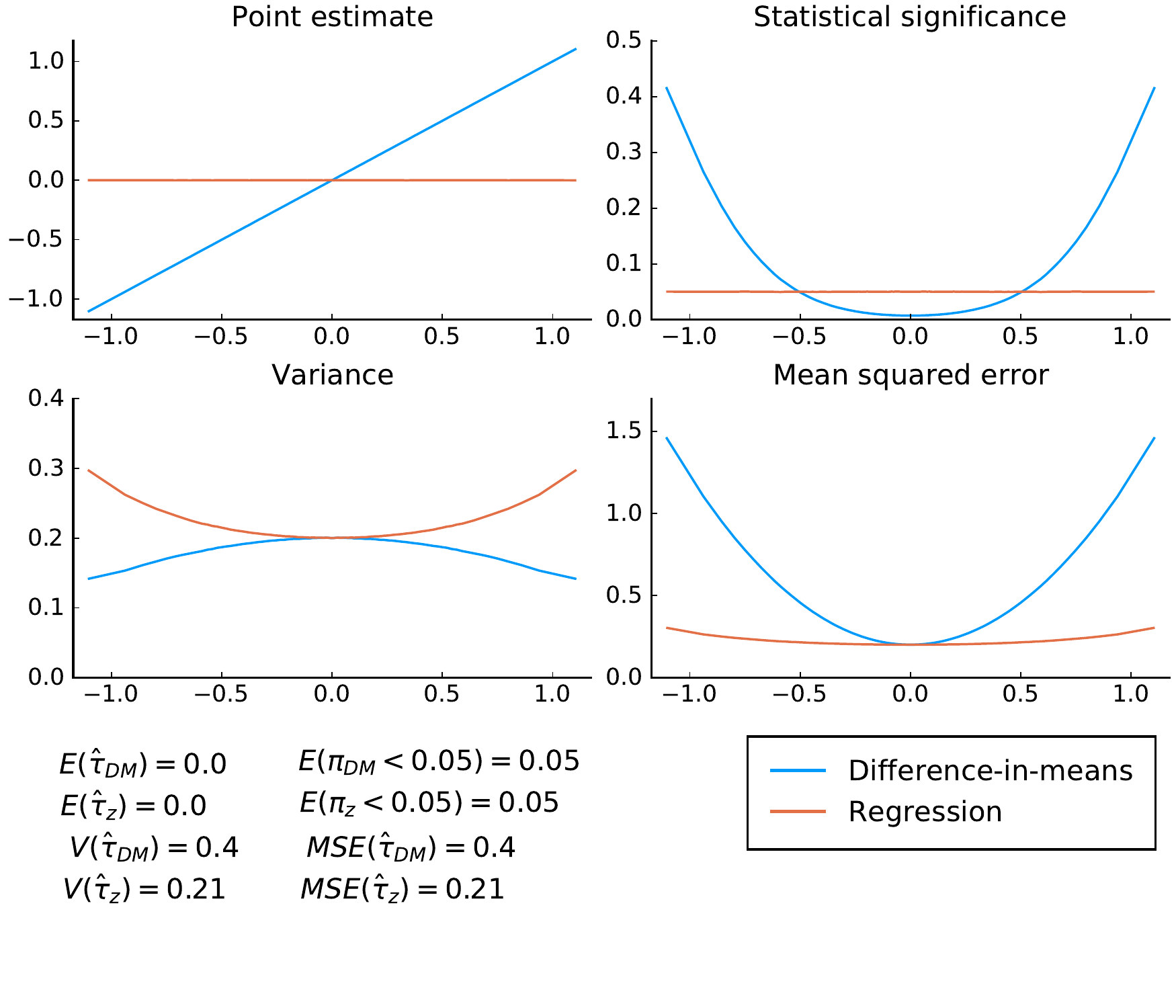} 
	\caption{Simulation results\label{fig:combined50}}
	\floatfoot{Note: The $x$-axis shows the average values of $\bm\Delta$ over the 1,000 replications for each percentile of $\bm\Delta$, whereas the $y$-axis indicate point estimate, statistical significance, variance and MSE for both the difference-in-means estimator and the OLS estimator. The unconditional values (i.e., not conditional on $\bm\Delta$) are shown in the bottom left corner of the figure. For the (two-sided) tests, the significance level is set at five percent. For the OLS estimators, the standard OLS covariance matrix is used. $\pi_{DM}$ and $\pi_{z}$ are the $p$-values from the respective tests.}
\end{figure}

Turning to the size of the tests, we first note that the unconditional size is correct for both estimators. The conditional test is correct for the OLS estimator, but wildly off for the difference-in-means estimator (a simple $t$-test). The more $\bm\Delta$ deviates from zero, the higher the rejection rate of the null hypothesis. Importantly, because the test has correct size on average, the size of the test conditional on $\bm\Delta$ being close to zero is smaller than 0.05, meaning the test in that range is conservative. It is also noteworthy that for no value of $\bm\Delta$ is the difference-in-means estimator conditionally unbiased with correct size of the hypothesis test.

The final two graphs show the conditional variance and MSE. Because the OLS estimator---but not the difference-in-means estimator---is conditionally unbiased, these are the same for the former but not the latter. The theoretical variances are given in equations \eqref{eq:var_cond_dm} and \eqref{eq:var_cond_reg}. The figure shows that $V_{\mathcal{W}_{\bm{\Delta}} }(\overline{\varepsilon }_{1}-\overline{\varepsilon }_{0})$ is decreasing as the magnitude of $\bm\Delta$ increases. The reason is that as $\bm\Delta$ increases, the assignment vectors become more similar to each other, and so $\overline\varepsilon_1-\overline\varepsilon_0$ become more similar. This result is in line with the theoretical result when the covariate is a dummy variable derived in the Appendix. For the OLS estimator, on the other hand, the term $\left( 1 - M_{\bm{\Delta}}/ (n-1) \right)^{-2}$ dominates such that the conditional variance is increasing in the magnitude of $\bm\Delta$. Consistent with the theoretical analysis, the conditional variance is identical between the two estimators when $\bm\Delta=\mathbf{0}$. The MSE is consistently greater for the difference-in-means estimator than the OLS estimator.

\section{Selection of covariates}
\subsection{Regression-based inference \label{sec:reg_based_inf}}
In situations when the number of observations are much larger than the number of relevant covariates ($n\gg K$), the preceding analysis suggests that it is always better to condition on the covariates than not condition on them as it will lead to a lower mean squared error and correct conditional inference. Even if a covariate is not relevant ($\beta=0$), little is lost with a large sample size. However, if $K$ is not order of magnitudes smaller than $n$, equation \eqref{eq:comp_dm_ols} implies that there is a tradeoff between adding more covariates as the bias term ($\bm{\Delta}^{\prime }\bm\upbeta$) decreases while the variance increases due to increase in the Mahalanobis distance, $M_{\bm\Delta}$. In the extreme case, with $K>n$, it is not even possible to condition on all covariates in a regression. So what should one do in such a case?

A common practice is to condition only on covariates which show large imbalances, but as \cite{mutz_perils_2019} show, such an approach will lead to incorrect inference. Another possibility would be to choose covariates based on perceived importance in explaining the outcome. However, unless such an approach is specified in a pre-analysis plan, it opens up the possibility for the researcher to select covariates in a large number of ways, potentially leading to issues such as data-mining and $p$-hacking. Even when the reseracher is completely honest, such an approach lack transparency, making it difficult for the research community at large to ascertain the credibility of the results.

It is therefore useful to have a rule-based system of covariate selection which limits the degrees of freedom of the researcher. We propose such a rule of covariate selection which builds on the idea of randomization inference. Randomization inference after covariate adjustments is conditional on a set of assignment vectors, $\mathcal{W}_{\bm\Delta}$ for which $\bm\Delta=\mathbf{c}$. If this set is too small, then randomization-based justification for inference collapses (\citealt{cox_randomization_2009}) and inference can only be justified under the assumption of random sampling from some population. The smallest $p$-value which can be attained from Fisher's exact test is $1/|\mathcal{W}_{\bm\Delta}|$, so. e.g., if it should be possible to achieve a $p$-value of 0.01 or smaller, it must be the case that there are at least 100 assignment vectors which has the same value of $\bm\Delta$.

If there are a few discrete covariates, then this would generally be true. However if the covariates are continuous, then it would typically be the case that $|\mathcal{W}_{\bm\Delta}|=1$ and, strictly speaking, inference based on the OLS estimator cannot be justified based on randomization.

Instead, we suggest basing inference on the set $\mathcal{W}_{\widetilde{\bm\Delta}}$ where all elements in the set yield a distance which is approximately equal to $\bm\Delta$. Note that, asymptotically, it is the case that $M_{\bm{\Delta}} \sim \chi^2(K)$. Let $\bm\Delta_j := \bm\Delta(\mathbf{W}_j) - \bm\Delta(\mathbf{W})$, where $\mathbf{W}$ is the assignment vector actually chosen and $\mathbf{W}_j\in\mathcal{W}$. It is the case that $M_{\bm\Delta_j}:=\frac{n_0n_1}{n}\bm\Delta_{j}'(\widetilde{\mathbf{Z}'}\mathbf{\widetilde{\mathbf{Z}}})^{-1}\bm\Delta_j$ follows a noncentral chi-square distribution with $K$ degrees of freedom and noncentrality parameter of $M_{\bm\Delta}$. We can now define the set $\mathcal{W}_{\widetilde{\bm\Delta}}$ as $\mathcal{W}_{\widetilde{\bm\Delta}} = \{ \mathbf{W}\in\mathcal{W} : M_{\bm\Delta_j} \leq \bar{\delta} \}$, where $\bar{\delta}$ is a small threshold value which should be set close to zero. For $\bar{\delta}=0$, it is the case that $\mathcal{W}_{\widetilde{\bm\Delta}}=\mathcal{W}_{\bm\Delta}$.

Let $H=|\mathcal{W}_{\widetilde{\bm\Delta}}|$ be the number of assignment vectors with small enough distance from the original treatment assignment to approximately justify randomization-based inference. In practice, for moderately sized $n$ it is not possible to go through all the $n_A=\binom{n}{n_1}$ assignment vectors to find $H$. However, by using the fact that $M_{\bm\Delta_j}$ follows a noncentral chi-square distribution, we can calculate the approximate size of the set as
\begin{equation}
	n_{\bar{\delta}} = F_{K,M_{\bm\Delta}}(\bar{\delta})  \cdot n_A,
\end{equation}
where $F_{K,M_{\bm\Delta}}(\cdot)$ is the cdf of the noncentral chi-square distribution with $K$ degrees of freedom and noncentrality parameter of $M_{\bm\Delta}$. If it is the case that $n_{\bar{\delta}} \geq H$, then the OLS estimator of the treatment effect, controlling for $\mathbf{Z}$, can be justified from a randomization inference perspective. In practice, if $\bar{\delta}$ is small and $K$ is reasonably large, it will be the case that $n_{\bar{\delta}} < H$. It is therefore necessary to somehow restrict the number of covariates that will be conditioned on.

We propose to condition on the principal components of $\mathbf{Z}$. There are two reasons for this proposal: First, if covariates are correlated, it is a natural way of reducing the dimensionality of the covariate space. Second, principal components are naturally ordered in descending variances. Let $\mathbf{Z}_p^{pc}=\left[\begin{array}{cccc} \mathbf{z}_1^{pc} & \mathbf{z}_2^{pc} & \hdots & \mathbf{z}_p^{pc}\end{array}\right]$ be a matrix of the first $p$ principal components of $\mathbf{Z}$; it is the case that $M^{pc}_{\bm{\Delta}}= \frac{n_0n_1}{n}\bm\Delta^{pc'}(\widetilde{\mathbf{Z}}_p^{pc'}\mathbf{\widetilde{\mathbf{Z}}}_p^{pc})^{-1}\bm\Delta^{pc}\sim \chi^2(p)$, resulting in $M^{pc}_{\bm{\Delta}_j}$ following a noncentral chi-square distribution with $p$ degrees of freedom and noncentrality parameter of $M^{pc}_{\bm{\Delta}}$.

With the natural ordering of the principal components, we suggest a simple algorithm (Algorithm \ref{alg:ComponentSelection}) which yield the number of principal componenst to condition on in a regression estimation of the treatment effect.

\begin{algorithm}
    \caption{Component selection \label{alg:ComponentSelection}}
    \begin{algorithmic}[1] 
    		\State Set $\bar{\delta}$ and $H$
            \State $p \gets 0$
            \State $n_{\bar{\delta}} \gets n_A$
            \While{$n_{\bar{\delta}} \geq H$}
            	\State Select first $p+1$ principal components and calculate the Mahalanobis distance
            	\State Calculate $n_{\bar{\delta}}$
                \If{$n_{\bar{\delta}} \geq H$}
                $p \gets p+1$
                \EndIf
            \EndWhile
            \State \textbf{return} $p$
    \end{algorithmic}
\end{algorithm}

After the components have been selected, we get the treatment effect estimator from a regression of $Y$ on the treatment indicator, controlling for the $p$ principal components.

\subsection{Simulation results \label{sec:sim_res}}
To study how our algorithm compares to other estimators, we perform a simple simulation study. Specifically, we generate data as
\begin{equation}
	Y_i(0) = Y_i(1) = \mathbf{z}_i\mathbf{b} + u_i,
\end{equation}
where $\mathbf{Z}\sim N(\mathbf{0}, \mathbf{I})$, $\mathbf{b}=\left[\begin{array}{cccc}
	\frac{1}{\sqrt{K}} & \frac{1}{\sqrt{K}} & \ldots & \frac{1}{\sqrt{K}} \\
\end{array}\right]'$ and $u\sim N(0,1)$. With this setup, we have $\var(\mathbf{z}_i\mathbf{b})=\var(u_i)$, which means the $R^2$ from a regression of $\mathbf{Y(0)}$ on $\mathbf{Z}$ should be around 0.5. For a randomly selected sample, we draw 10,000 random treatment assignment vectors and estimate the treatment effect. We then repeat this process for 1,000 different samples and calculate the average MSE. For our algorithm, we let $\bar{\delta}=0.01$ and the sample size is set to $n=50$ with $n_0=n_1=25$. We vary $K$ (the number of covariates) from 2 to 40 in steps of 2.

With this setup, the covariates are orthogonal to each other in the population, and so we should not expect the PCA to effectively reduce the dimensionality of the data. Hence, this setup can be considered a ``worst case'' for our method. To study what happens when covariates are correlated, we use the method suggested by \cite{lewandowski_generating_2009} to generate correlated covariates with the parameter $\eta$ being set to one.

We contrast our estimator with three other estimators: i) the difference-in-means estimator, ii) the OLS estimator when all covariates are used as controls and iii) the cross-estimation estimator suggested by \cite{wager_high-dimensional_2016}. The latter estimator uses high-dimensional regression adjustments with an elastic net to select important covariates when there are many covariates relative to the number of observations.

Figure \ref{fig:sim_results_2_40} shows the result from the simulations. Beginning with the left graph---which shows the results from orthogonal covariates---we see that with few covariates, the MSE of the difference-in-means estimator is around double that of the OLS estimator, which is what we should expect for $n\gg K$ as the covariates account for fifty percent of the variation in $Y(0)$ (see, for instance, \citealt{morgan_rerandomization_2012}). Notably, the OLS estimator and our PCA-based estimator is identical in that case. The reason is simply that with so few covariates, all principal components are selected, and conditioning on all principal components is equivalent to conditioning on all covariates. The cross-estimation estimator lies somewhere between the difference-in-means estimator and the other estimators.

\begin{figure}
	\captionsetup{position=top}
	\caption{MSE, homogeneous treatment effect \label{fig:sim_results_2_40}}
	\includegraphics[width=\linewidth]{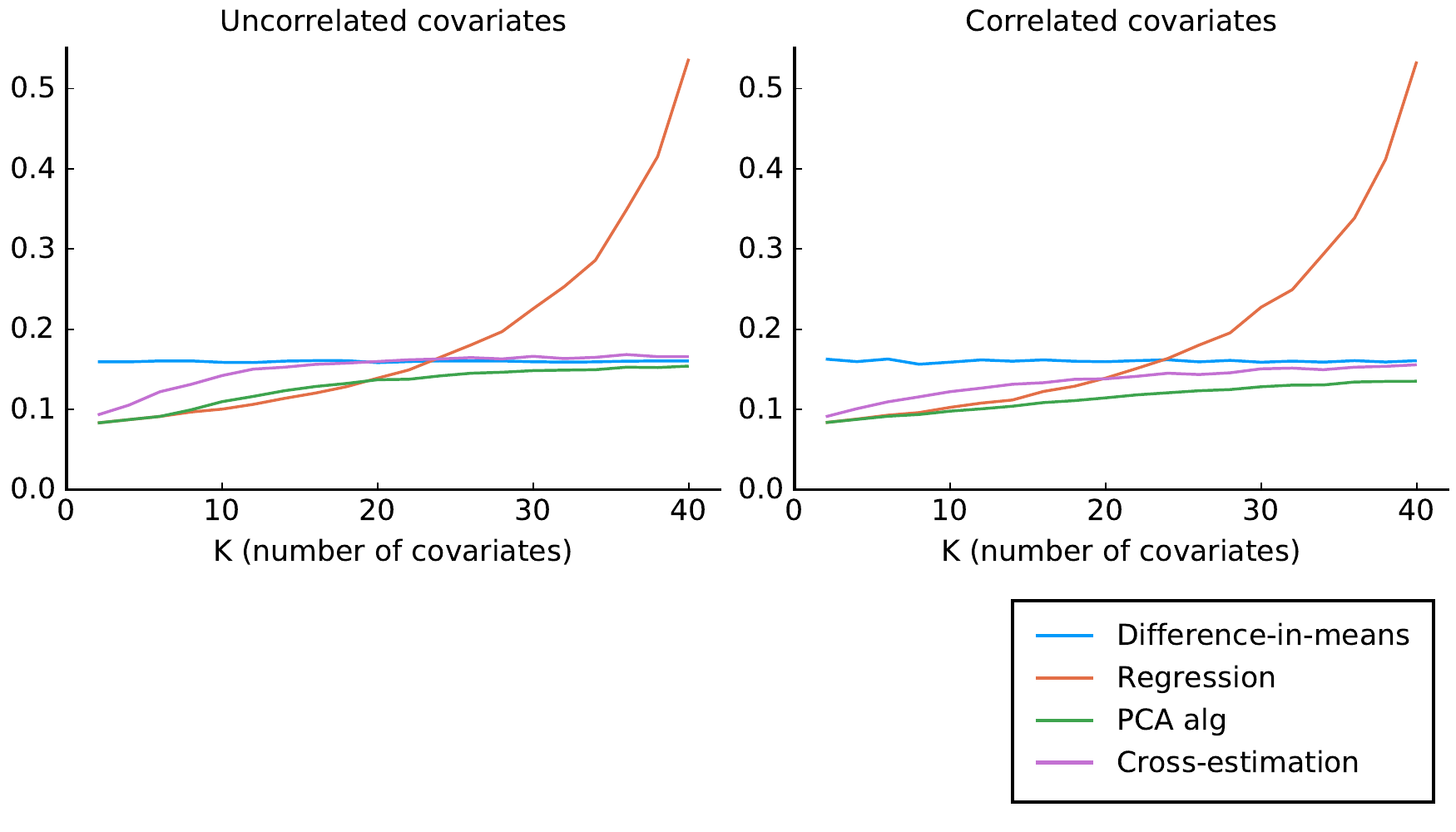}
	\floatfoot{\footnotesize Note: For each value of $K$, 1,000 samples are drawn with 10,000 assignment vectors selected for each sample. For the cross-estimation estimator, for computational time purposes, only 100 assignment vectors are selected for each sample. The average MSE is shown for each $K$. The sample size is set to 50 and $\tau=0$.}
\end{figure}

As $K$ increses, the MSE of the difference-in-means estimator is naturally unchanged, while the MSE of the three other estimators increases. For an interval with $K$ between 10 and 20, the OLS estimator marginally outperforms our estimator, but once the number of covariates increases further, the MSE of the OLS estimator skyrockets. For our estimator, the MSE increases slowly and stays consistently lower than that of the difference-in-means estimator. The cross-estimation estimator is clearly better than the OLS estimator for large $K$, but performs worse than our estimator.

The left graph shows the results from the worst case for our estimator. In the right graph, we show results when covariates are correlated. The difference-in-means and OLS estimators are very similar to the previous case, but now our estimator outperforms both of them for all values of $K$ (except for small $K$ when the OLS estimator and our estimator are equivalent). The cross-estimation estimator also outperforms the other estimators for large $K$, but still performs worse than our estimator.

The results in Figure \ref{fig:sim_results_2_40} shows the average of the MSE for each value of $K$. However, as we discuss previously, the MSE will depend on $\bm\Delta$. Because $\bm\Delta$ is $K$-dimensional, it is not possible to illustrate the results as we did in Figure \ref{fig:combined50}. Instead, for each sample, we take the average MSE of each percentile of the Mahalanobis distance, $M_{\bm{\Delta}}$, and then take the average for each percentile over all 1,000 samples. We show the results for $K=10,20,30$.

Results are shown in Figure \ref{fig:mse_cond_M} with the uncorrelated covariates in the top panel and the correlated covariates in the bottom panel. In Figure \ref{fig:combined50}, the MSE displayed a U-shaped pattern with minimum when $\bm\Delta=0$. Because the Mahalanobis distance is a (weighted) square of $\bm\Delta$, the MSE is now increasing in the Mahalanobis distance for all four estimators. We see that the MSE of the difference-in-means estimator, our PCA estimator and the cross-estimation estimator all increase at roughly the same pace, while the OLS estimator has an MSE that increases sharply for large distances once $K$ is large. Note that the cross-estimation estimator is more variable because, for computational time purposes, we only selected 100 instead of 10,000 random assignment vectors per random sample.

\begin{figure}
	\captionsetup{position=top}
	\caption{MSE by percentile of the original Mahalanobis distance \label{fig:mse_cond_M}}
	\subfloat[Uncorrelated covariates]{
		\includegraphics[width=\linewidth]{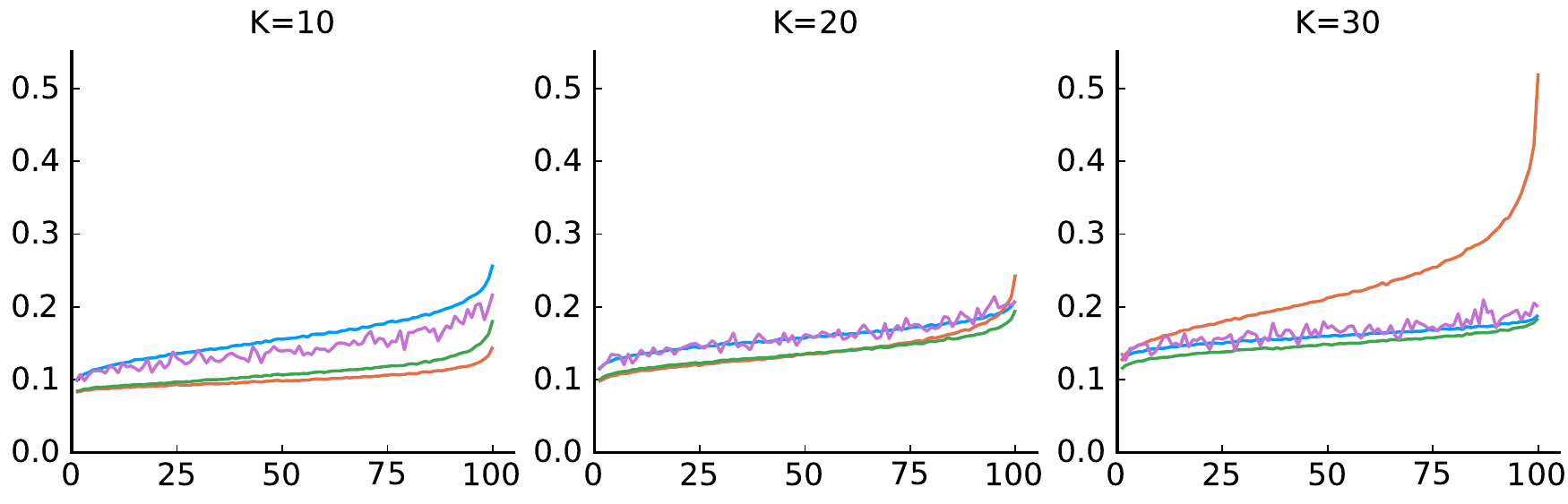}}\\
	\subfloat[Correlated covariates]{
		\includegraphics[width=\linewidth]{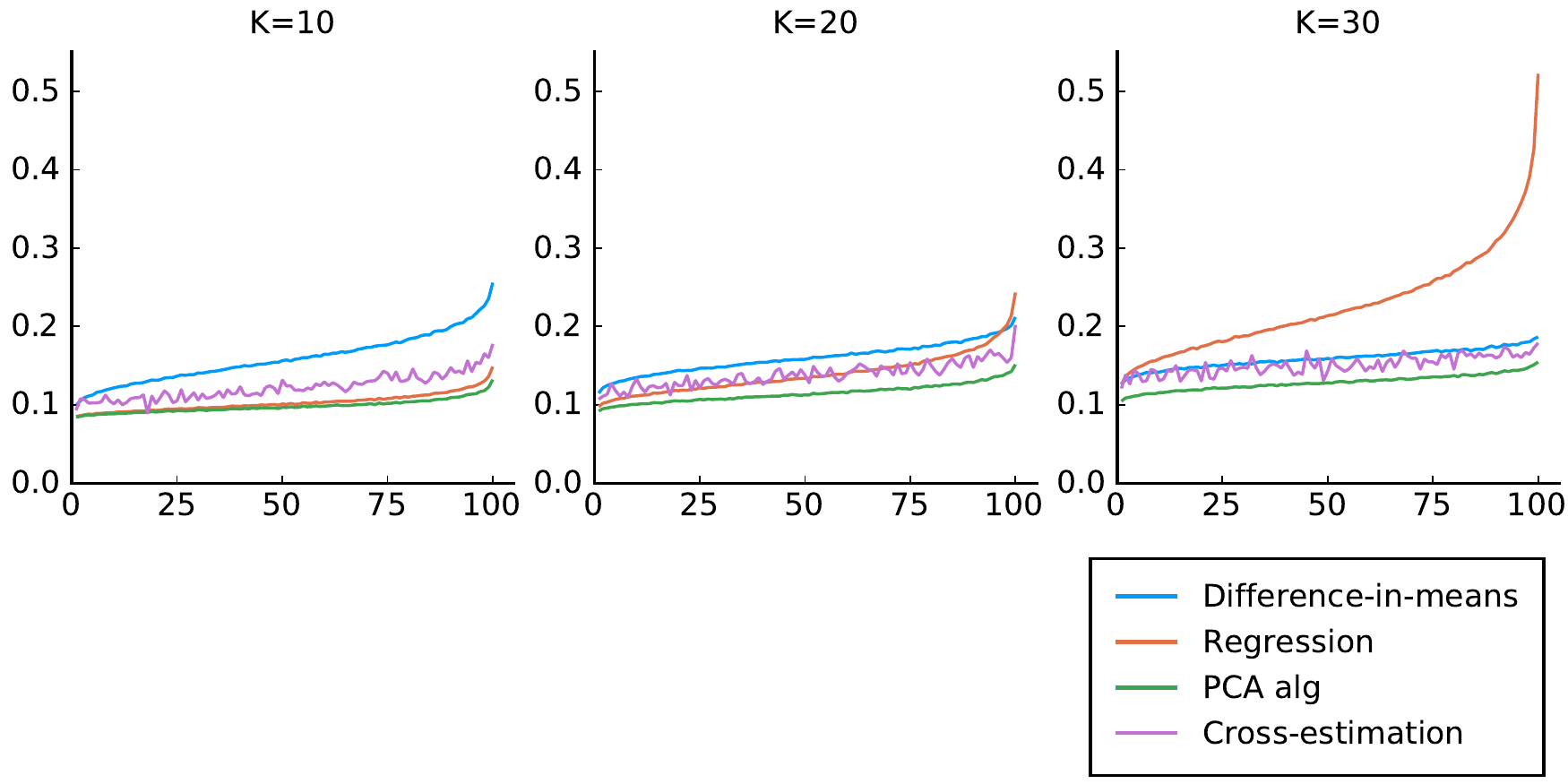}}
	\floatfoot{\footnotesize Note: For each value of $K$, 1,000 samples are drawn with 10,000 assignment vectors selected for each sample. For the cross-estimation estimator, for computational time purposes, only 100 assignment vectors are selected for each sample. The sample size is set to 50 and $\tau=0$.}
\end{figure}

To help with the interpretation of the results, Figure \ref{fig:fig_nr_p} shows the number of principal components selected for our PCA estimator by Algorithm \ref{alg:ComponentSelection} for different values of $K$. The figure shows results only for the uncorrelated covariates, but the results are virtually identical with correlated covariates. Overall, the number of components decreases as the original Mahalanobis distance increases, as it is, on average, more difficult to find similar components in that case.

\begin{figure}
	\captionsetup{position=top}
	\caption{Number of components by percentile of the original Mahalanobis distance \label{fig:fig_nr_p}}
	\includegraphics[width=.7\linewidth]{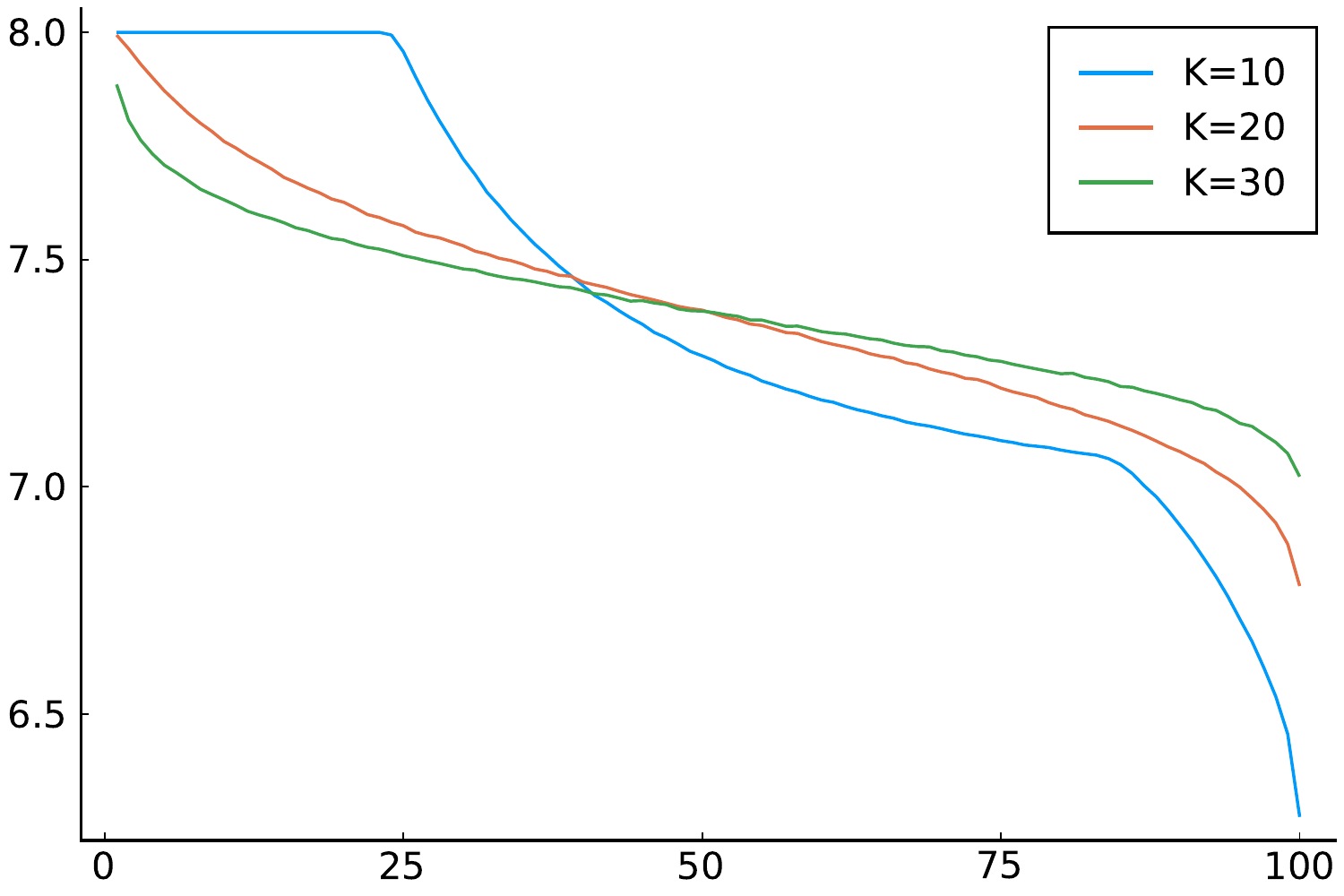}
	\floatfoot{\footnotesize Note: The figure shows the average number of components selected by Algorithm \ref{alg:ComponentSelection} for the uncorrelated covariates. With correlated covariates, the figure would be virtually identical.}
\end{figure}

In Table \ref{tab:size_homo} we show the size of a two-sided test of $\tau=0$ for each of the four estimators (five percent significance level) for $K=10,20,30$. The first column shows the average size (independent of the Mahalanobis distance). As can be seen, the difference-in-means estimator, the OLS estimator and our PCA estimator all have approximately correct size, whereas the cross-estimation estimator overrejects the null, with a rejection rate of around eight percent instead of five.

\begin{table}[p!]
	\begin{threeparttable}
	    \caption{Size, homogeneous effects \label{tab:size_homo}}
	    \begin{tabular*}{\textwidth}{l @{\extracolsep{\fill}} cccccc}
\toprule
Quintiles & All & 1st & 2nd & 3rd & 4th & 5th \\
\midrule & \multicolumn{6}{c}{Uncorrelated covariates} \\
\cmidrule(lr){2-7}
\multicolumn{7}{l}{\(K=10\)}\\
Difference-in-means & 0.05 & 0.023 & 0.036 & 0.047 & 0.06 & 0.085 \\
Cross-estimation & 0.08 & 0.053 & 0.067 & 0.073 & 0.09 & 0.116 \\
Regression & 0.05 & 0.05 & 0.05 & 0.05 & 0.05 & 0.05 \\
PCA alg & 0.051 & 0.044 & 0.047 & 0.05 & 0.053 & 0.064 \\
\addlinespace
\addlinespace\multicolumn{7}{l}{\(K=20\)}\\
Difference-in-means & 0.05 & 0.032 & 0.042 & 0.049 & 0.057 & 0.07 \\
Cross-estimation & 0.082 & 0.058 & 0.072 & 0.079 & 0.089 & 0.111 \\
Regression & 0.05 & 0.05 & 0.05 & 0.05 & 0.05 & 0.05 \\
PCA alg & 0.051 & 0.038 & 0.045 & 0.05 & 0.056 & 0.066 \\
\addlinespace
\addlinespace\multicolumn{7}{l}{\(K=30\)}\\
Difference-in-means & 0.05 & 0.038 & 0.045 & 0.05 & 0.055 & 0.062 \\
Cross-estimation & 0.08 & 0.064 & 0.072 & 0.08 & 0.085 & 0.1 \\
Regression & 0.05 & 0.05 & 0.05 & 0.05 & 0.05 & 0.05 \\
PCA alg & 0.051 & 0.041 & 0.047 & 0.051 & 0.055 & 0.061 \\
\addlinespace
\addlinespace & \multicolumn{6}{c}{Correlated covariates} \\
\cmidrule(lr){2-7}
\multicolumn{7}{l}{\(K=10\)}\\
Difference-in-means & 0.05 & 0.025 & 0.036 & 0.047 & 0.059 & 0.083 \\
Cross-estimation & 0.076 & 0.056 & 0.064 & 0.073 & 0.084 & 0.101 \\
Regression & 0.05 & 0.05 & 0.05 & 0.05 & 0.05 & 0.05 \\
PCA alg & 0.051 & 0.049 & 0.05 & 0.05 & 0.051 & 0.053 \\
\addlinespace
\addlinespace\multicolumn{7}{l}{\(K=20\)}\\
Difference-in-means & 0.05 & 0.032 & 0.042 & 0.049 & 0.057 & 0.07 \\
Cross-estimation & 0.079 & 0.061 & 0.071 & 0.079 & 0.086 & 0.101 \\
Regression & 0.05 & 0.05 & 0.05 & 0.05 & 0.05 & 0.05 \\
PCA alg & 0.051 & 0.044 & 0.048 & 0.051 & 0.054 & 0.06 \\
\addlinespace
\addlinespace\multicolumn{7}{l}{\(K=30\)}\\
Difference-in-means & 0.05 & 0.038 & 0.045 & 0.05 & 0.055 & 0.062 \\
Cross-estimation & 0.082 & 0.068 & 0.079 & 0.081 & 0.084 & 0.097 \\
Regression & 0.05 & 0.05 & 0.05 & 0.05 & 0.05 & 0.05 \\
PCA alg & 0.051 & 0.044 & 0.048 & 0.051 & 0.054 & 0.058 \\
\addlinespace
\addlinespace\bottomrule
\end{tabular*}
	    \begin{tablenotes}
	    	\item[] \footnotesize Note: The table shows the size of a two-sided test of $\tau=0$ at five percent significance level. The first column shows the unconditional size, whereas the next five shows the size for each quintile of the Mahalanobis distance, $M_{\bm\Delta}$. For each value of $K$, 1,000 samples are drawn with 10,000 assignment vectors selected for each sample. For the cross-estimation estimator, for computational time purposes, only 100 assignment vectors are selected for each sample. The sample size is set to 50. For the regression-based estimators, the standard OLS covariance matrix is used.
	    \end{tablenotes}
	\end{threeparttable}
\end{table}

The following columns show the results separately for each quintile of the Mahalanobis distance, $M_{\bm\Delta}$. We now see that only the OLS estimator maintains correct size regardless of the value of the Mahalanobis distance, whereas the difference-in-means estimator clearly underrejects for small values of the Mahalanobis distance and overrejects for large values. This pattern is expected, as the difference-in-means estimator does not take the covariate imbalance into account. A similar pattern is found for the cross-estimation estimator, but with a higher rejection rate. Finally, for our PCA estimator, the rejection rate is also increasing with the Mahalanobis distance, but at a slower pace, as the covariate imbalance is partially taken into account by the selected principal components.

Finally, Table \ref{tab:power_homo} shows the power of the different estimators with $\tau$ set to one (from a two-sided test of $\tau=0$). The results are very similar to the result for the MSE: with $K=10$, the OLS estimator is the most powerful estimator, closely followed by our PCA estimator. For larger $K$, the OLS estimator becomes much worse, while the PCA estimator continuous to perform well. The cross-estimation estimator also performs comparatively well for $K=30$, but it should be noted that the power is not size-adjusted.

\begin{table}[p!]
	\begin{threeparttable}
	    \caption{Power, homogeneous effects \label{tab:power_homo}}
	    \begin{tabular*}{\textwidth}{l @{\extracolsep{\fill}} cccccc}
\toprule
Quintiles & All & 1st & 2nd & 3rd & 4th & 5th \\
\midrule & \multicolumn{6}{c}{Uncorrelated covariates} \\
\cmidrule(lr){2-7}
\multicolumn{7}{l}{\(K=10\)}\\
Difference-in-means & 0.686 & 0.709 & 0.695 & 0.686 & 0.677 & 0.664 \\
Cross-estimation & 0.792 & 0.815 & 0.802 & 0.797 & 0.781 & 0.768 \\
Regression & 0.858 & 0.899 & 0.881 & 0.865 & 0.845 & 0.801 \\
PCA alg & 0.839 & 0.882 & 0.862 & 0.843 & 0.823 & 0.785 \\
\addlinespace
\addlinespace\multicolumn{7}{l}{\(K=20\)}\\
Difference-in-means & 0.686 & 0.701 & 0.691 & 0.686 & 0.68 & 0.673 \\
Cross-estimation & 0.74 & 0.759 & 0.744 & 0.741 & 0.737 & 0.72 \\
Regression & 0.744 & 0.831 & 0.788 & 0.753 & 0.713 & 0.635 \\
PCA alg & 0.759 & 0.797 & 0.773 & 0.758 & 0.743 & 0.72 \\
\addlinespace
\addlinespace\multicolumn{7}{l}{\(K=30\)}\\
Difference-in-means & 0.689 & 0.699 & 0.692 & 0.688 & 0.685 & 0.681 \\
Cross-estimation & 0.734 & 0.745 & 0.743 & 0.732 & 0.728 & 0.72 \\
Regression & 0.545 & 0.682 & 0.604 & 0.55 & 0.493 & 0.398 \\
PCA alg & 0.724 & 0.75 & 0.733 & 0.723 & 0.714 & 0.7 \\
\addlinespace
\addlinespace & \multicolumn{6}{c}{Correlated covariates} \\
\cmidrule(lr){2-7}
\multicolumn{7}{l}{\(K=10\)}\\
Difference-in-means & 0.698 & 0.717 & 0.705 & 0.698 & 0.69 & 0.678 \\
Cross-estimation & 0.838 & 0.854 & 0.84 & 0.839 & 0.837 & 0.818 \\
Regression & 0.865 & 0.905 & 0.888 & 0.872 & 0.852 & 0.808 \\
PCA alg & 0.882 & 0.91 & 0.896 & 0.885 & 0.872 & 0.846 \\
\addlinespace
\addlinespace\multicolumn{7}{l}{\(K=20\)}\\
Difference-in-means & 0.691 & 0.703 & 0.695 & 0.691 & 0.686 & 0.679 \\
Cross-estimation & 0.801 & 0.816 & 0.807 & 0.797 & 0.797 & 0.788 \\
Regression & 0.732 & 0.82 & 0.776 & 0.742 & 0.701 & 0.622 \\
PCA alg & 0.82 & 0.85 & 0.832 & 0.821 & 0.808 & 0.787 \\
\addlinespace
\addlinespace\multicolumn{7}{l}{\(K=30\)}\\
Difference-in-means & 0.684 & 0.692 & 0.687 & 0.684 & 0.68 & 0.676 \\
Cross-estimation & 0.769 & 0.781 & 0.774 & 0.766 & 0.766 & 0.757 \\
Regression & 0.526 & 0.661 & 0.584 & 0.53 & 0.473 & 0.381 \\
PCA alg & 0.779 & 0.802 & 0.787 & 0.778 & 0.769 & 0.757 \\
\addlinespace
\addlinespace\bottomrule
\end{tabular*}
	    \begin{tablenotes}
	    	\item[] \footnotesize Note: The table shows the power from of a two-sided test of $\tau=0$ at five percent significance level, with $\tau=1$. The first column shows the unconditional power, whereas the next five shows the power for each quintile of the Mahalanobis distance, $M_{\bm\Delta}$. For each value of $K$, 1,000 samples are drawn with 10,000 assignment vectors selected for each sample. For the cross-estimation estimator, for computational time purposes, only 100 assignment vectors are selected for each sample. The sample size is set to 50. For the regression-based estimators, the standard OLS covariance matrix is used.
	    \end{tablenotes}
	\end{threeparttable}
\end{table}

Overall, we conclude that our PCA-based estimator generally outperforms the other three estimator in terms of MSE. While the size is not always correct conditional on observed differences in covariates, the issue is smaller than for the difference-in-means estimator or cross-estimation estimator. It is also important to note that when $n\gg K$, our estimator essentially collapses to the OLS estimator.

\subsection{Randomization inference}
The OLS estimator above builds on the idea that---in principle---there are a number of different treatment assignments which give approximately the same Mahalanobis distance between treatment and control in the selected principal components. However, we do not actually find these assignments, and the estimator is a standard OLS estimator.

In this section, we instead suggest that there is a way to perform randomization inference in the spirit of Fisher's exact test, but conditional on the observed distance, $\bm\Delta$. By the same argument as for regression-based inference, we focus on the set $\mathcal{W}_{\widetilde{\bm\Delta}}$. Under the sharp null, all data are observed and we can calculate the treatment effect estimates for all assignment vectors in this set. The $p$-value is retrieved in the same way as for Fisher's exact test, with the difference that we consider the set $\mathcal{W}_{\widetilde{\bm\Delta}}$ instead of $\mathcal{W}$: sort the absolute values of the difference-in-means estimator from the assignment vectors $\mathcal{W}_{\widetilde{\bm\Delta}}$ in descending order. Let $r$ be the rank of the estimator for assignment vector $\mathbf{W}$, the $p$-value is given by $r/H$.

If $\bar{\delta}=0$, this test is exact. However, when $\bar{\delta}$ is small, the test will only have approximately correct size. For small sample sizes, it is possible to go through all assignment vectors to identify the set $\mathcal{W}_{\widetilde{\bm\Delta}}$. However, for larger sample sizes, this is no longer possible, and the $p$-value will have to be Monte Carlo approximated by sampling a subset of $\mathcal{W}_{\widetilde{\bm\Delta}}$.

To illustrate the properties of this test in the former case (i.e., when it is possible to go through all the assignment vectors)---and to contrast it with other versions of Fisher's exact test---we perform the same exercise as the one in Section \ref{sec:illustration} with $Y_i(0)= Z_i + u_i$, $\tau=0$ and $n_0=n_1=10$, where both $Z$ and $u$ follow a standard normal distribution. Once again, because the sample size is small, we can go through all $n_A=184,756$ treatment assignment vectors.

To perform the test, we take the given treatment assignment, form the set $\mathcal{W}_{\widetilde{\bm\Delta}}$ with $\bar\delta=0.01$, and compare the rank of the treatment effect estimate compared to all other estimates formed by the assignment vectors in the set. Note that the cardinality of $\mathcal{W}_{\widetilde{\bm\Delta}}$ will vary, and therefore also the resolution of the $p$-values. For small magnitudes of $\bm\Delta$ the resolution will generally be high, whereas for the most extreme treatment assignments (i.e., when the magnitude of $\bm\Delta$ is the largest), it is even possible that the set only contain the original treatment assignment itself.

We contrast this test with i) Fisher's exact test and ii) the regression-based Fisher-test. In the latter, we perform $n_A$ different regressions and compare the estimate of the actual treatment assignment with all other estimates.  Note that both Fisher's exact test and the regression-based test are exact. That is, the null is rejected in exactly 9,237 out of 184,756 cases in every sample.

Results are shown in Figure \ref{fig:size_cond_delta} where we have performed the analysis for 1,000 different samples, and just as in Figure \ref{fig:combined50}, we illustrate the size conditional on $\bm\Delta$. For Fisher's exact test, as expected, we see virtually the same pattern as for the $t$-test. That is, when $\bm\Delta$ is at the extreme ends, the null is severely overrejected, whereas when $\bm\Delta$ is close to zero, the reverse happens.

Perhaps more surprisingly is that the size is incorrect for the regression-based Fisher-test conditional on $\bm\Delta$. This is different from the corresponding regression-based test in the Neyman-Pearson framework. While the pattern is less stark than for Fisher's exact test, once again we see overrejection when the magnitude of $\bm\Delta$ is large and underrejection when it is small.

For our approximate Fisher-test on the other hand, the size is approximately correct even conditional on $\bm\Delta$. While we cannot be guaranteed exact correct size over all possible assignment vectors, we can see that it holds approximately.

\begin{figure}
	\captionsetup{position=top}
	\caption{Size conditional on $\bm\Delta$ \label{fig:size_cond_delta}}
	\includegraphics[width=.75\linewidth]{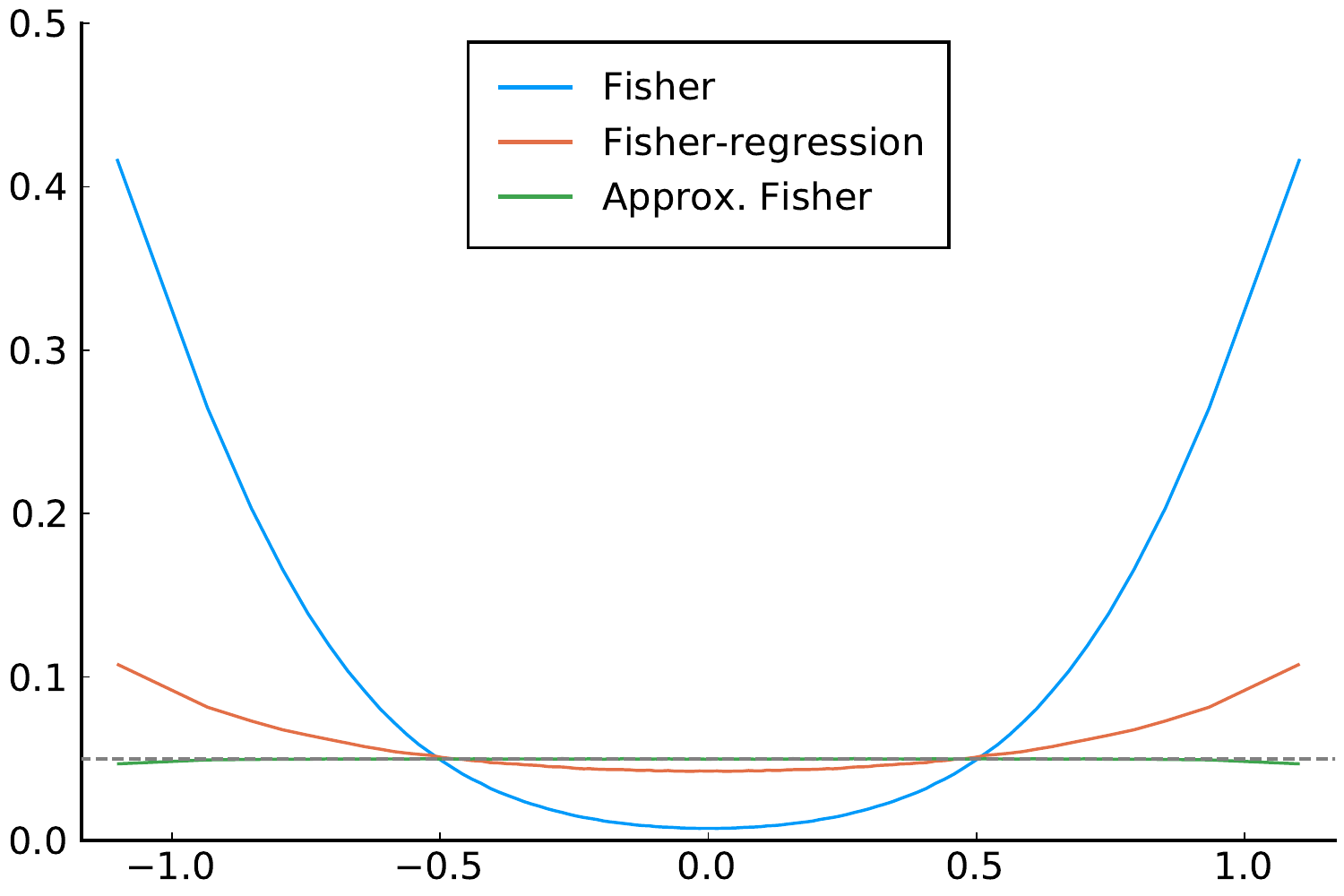}
	\floatfoot{\footnotesize Note: The $x$-axis shows the average values of $\bm\Delta$ over the 1,000 replications for each percentile of $\bm\Delta$, whereas the $y$-axis indicate statistical significance for Fisher's exact test, the regression-based Fisher-test and our approximate Fisher-test at five percent significance level. The dashed line indicates the correct size at .05.}
\end{figure}

In this simple illustration of the test, we only have one covariate. In the general case with $K$ covariates, for the same reason as outlined in Section \ref{sec:reg_based_inf}, we focus on the principal components. However, we cannot use Algorithm \ref{alg:ComponentSelection} to find the number of components for performing the test, because we now have to actually find all the assignment vectors. Let $p_F$ be the number of components selected for the randomization test. It is the case that $p_F\leq p$.

Depending on how one searches for $\mathcal{W}_{\widetilde{\bm\Delta}}$, the number of components will likely vary. If $\mathcal{W}_{\widetilde{\bm\Delta}}$ is found by rerandomization, i.e., by randomly sampling assignment vectors from $\mathcal{W}$ until a sufficient number of assignment vectors (such as 100) have been found, then the number of components will likely be very few, as rerandomization is a fairly inefficient way of finding similar assignment vectors, especially for relatively large $p$ when $|\mathcal{W}_{\widetilde{\bm\Delta}}| \ll |\mathcal{W}|$.

A much more efficient algorithm is the greedy pair-switching algorithm of \cite{krieger_nearly_2019}. In that algorithm, a random initial assignment vector is selected. One then searches through all $n_0n_1$ possible pair-switches of treatment and control and switches treatment status for the pair-switch which decreases the Mahalanobis distance, $M_{\bm\Delta_j}$, the most. The resulting assignment vector becomes the new initial vector from which it is possible to search $n_0n_1$ new pair-switches. This process continues until there is no pair-switch that can be made to further decrease the Mahalanobis distance, in which case the algorithm is finished. The algorithm can then be repeated for a new random initial vector.

With the greedy pair-switching algorithm, the issue remains on how many components should be selected. We propose a very simple rule which begins with first getting the number of components using Algorithm \ref{alg:ComponentSelection}. We then run the pair-switching algorithm a decently large number of times, say $n_s=1,000$. If there are at least a small number of assignment vectors, $n_f$, out of the $n_s$ resulting assignment vectors which satisfy $M_{\bm\Delta_j} \leq \bar{\delta}$, we let the algorithm run until $H$ assignment vectors are found for which $M_{\bm\Delta_j} \leq \bar{\delta}$. If not, we repeat the search but with the first $p-1$ components instead. What values of $n_f$ and $n_s$ to choose depend on the computational time available.

We perform a Monte Carlo analysis with $n_1=n_0=25$ and $K=10,20,30$, with the results being shown in Table \ref{tab:size_homo_fisher} for four different Fisher tests, where we include two different regression based-based tests: one that is performed on the same components as our approximate Fisher-test and one which instead condition on all covariates, $\mathbf{Z}$. For our approximate Fisher-test, we set $n_s=1,000$ and $n_f=20$.

\begin{table}[p!]
\begin{threeparttable}
    \caption{Size, homogeneous effects, Fisher tests \label{tab:size_homo_fisher}}
    \begin{tabular*}{\textwidth}{l @{\extracolsep{\fill}} cccccc}
\toprule
Quintiles & All & 1st & 2nd & 3rd & 4th & 5th \\
\midrule & \multicolumn{6}{c}{Uncorrelated covariates} \\
\cmidrule(lr){2-7}
\multicolumn{7}{l}{\(K=10\)}\\
Approx. Fisher (PCA) & 0.049 & 0.037 & 0.044 & 0.047 & 0.053 & 0.065 \\
Fisher-regression (All) & 0.049 & 0.036 & 0.042 & 0.047 & 0.054 & 0.068 \\
Fisher-regression (PCA) & 0.049 & 0.03 & 0.041 & 0.047 & 0.056 & 0.073 \\
Fisher & 0.049 & 0.024 & 0.035 & 0.047 & 0.061 & 0.08 \\
\addlinespace
\addlinespace\multicolumn{7}{l}{\(K=20\)}\\
Approx. Fisher (PCA) & 0.05 & 0.036 & 0.046 & 0.052 & 0.052 & 0.065 \\
Fisher-regression (All) & 0.049 & 0.026 & 0.034 & 0.046 & 0.057 & 0.084 \\
Fisher-regression (PCA) & 0.05 & 0.03 & 0.044 & 0.049 & 0.057 & 0.068 \\
Fisher & 0.05 & 0.031 & 0.044 & 0.049 & 0.056 & 0.069 \\
\addlinespace
\addlinespace\multicolumn{7}{l}{\(K=30\)}\\
Approx. Fisher (PCA) & 0.05 & 0.041 & 0.046 & 0.05 & 0.054 & 0.061 \\
Fisher-regression (All) & 0.05 & 0.016 & 0.029 & 0.042 & 0.058 & 0.105 \\
Fisher-regression (PCA) & 0.05 & 0.038 & 0.042 & 0.048 & 0.055 & 0.066 \\
Fisher & 0.049 & 0.039 & 0.045 & 0.048 & 0.053 & 0.06 \\
\addlinespace
\addlinespace & \multicolumn{6}{c}{Correlated covariates} \\
\cmidrule(lr){2-7}
\multicolumn{7}{l}{\(K=10\)}\\
Approx. Fisher (PCA) & 0.05 & 0.045 & 0.049 & 0.05 & 0.052 & 0.056 \\
Fisher-regression (All) & 0.05 & 0.037 & 0.042 & 0.047 & 0.052 & 0.074 \\
Fisher-regression (PCA) & 0.05 & 0.038 & 0.044 & 0.05 & 0.054 & 0.066 \\
Fisher & 0.05 & 0.025 & 0.037 & 0.048 & 0.059 & 0.082 \\
\addlinespace
\addlinespace\multicolumn{7}{l}{\(K=20\)}\\
Approx. Fisher (PCA) & 0.05 & 0.041 & 0.046 & 0.049 & 0.053 & 0.061 \\
Fisher-regression (All) & 0.05 & 0.025 & 0.036 & 0.046 & 0.058 & 0.084 \\
Fisher-regression (PCA) & 0.049 & 0.034 & 0.042 & 0.048 & 0.055 & 0.065 \\
Fisher & 0.05 & 0.032 & 0.042 & 0.049 & 0.055 & 0.068 \\
\addlinespace
\addlinespace\multicolumn{7}{l}{\(K=30\)}\\
Approx. Fisher (PCA) & 0.051 & 0.041 & 0.05 & 0.052 & 0.052 & 0.062 \\
Fisher-regression (All) & 0.049 & 0.016 & 0.029 & 0.042 & 0.059 & 0.102 \\
Fisher-regression (PCA) & 0.05 & 0.036 & 0.046 & 0.05 & 0.052 & 0.064 \\
Fisher & 0.05 & 0.036 & 0.048 & 0.051 & 0.051 & 0.065 \\
\addlinespace
\addlinespace\bottomrule
\end{tabular*}
    \begin{tablenotes}
	    \item[] \footnotesize Note: The table shows the size from of a test of the sharp null at five percent significance level. The first column shows the unconditional size, whereas the next five shows the size for each quintile of the Mahalanobis distance, $M_{\bm\Delta}$. For each value of $K$, 1,000 samples are drawn with 100 assignment vectors selected for each sample. The sample size is set to 50 and the following parameters are used: $n_s=1000$, $n_f=20$ and $\bar{\delta}=0.01$.
	\end{tablenotes}
\end{threeparttable}
\end{table}

The first thing to note is that all four tests have the correct size unconditionally. Conditional on the Mahalanobis distance, the picture is different. For small Mahalanobis distances, all tests underrejects the null, whereas for large Mahalanobis distances, they all overreject the null. In all cases, however, the approximate Fisher is closest to having correct conditional size. 

Why does our approximate Fisher-test not have correct conditional size in line with the results in Figure \ref{fig:size_cond_delta}? The reason is that we do not condition on all principal components. If we were to do so (which can happen with few covariates), the test would maintain approximately correct size also conditional on the Mahalanobis distance.


Table \ref{tab:power_homo_fisher} shows the power of the four tests when $\tau=1$. On average, we can see that the regression-based Fisher-test using the using principal components is the overall most powerful test (the exception of $K=10$ and uncorrelated covariates when the regression-based Fisher-test is the most powerful). The power of the approximate Fisher-test is close to that of the principal component regression-based Fisher-test for large $K$. As expected, the power of the standard Fisher-test is, basically, constant across $K$. It is also the overall least powerful test for all $K$ less than 30. The power of the regression-based Fisher-test decreases sharply with $K$ and when $K=30$, the power is substantially lower than for all the other tests. This result clearly shows the importance of not naively controlling for covariates using regression when conducting randomized inference. 

\begin{table}[p!]
\begin{threeparttable}
    \caption{Power, homogeneous effects, Fisher tests \label{tab:power_homo_fisher}}
    \begin{tabular*}{\textwidth}{l @{\extracolsep{\fill}} cccccc}
\toprule
Quintiles & All & 1st & 2nd & 3rd & 4th & 5th \\
\midrule & \multicolumn{6}{c}{Uncorrelated covariates} \\
\cmidrule(lr){2-7}
\multicolumn{7}{l}{\(K=10\)}\\
Approx. Fisher (PCA) & 0.738 & 0.788 & 0.751 & 0.734 & 0.722 & 0.696 \\
Fisher-regression (All) & 0.847 & 0.857 & 0.85 & 0.848 & 0.842 & 0.835 \\
Fisher-regression (PCA) & 0.79 & 0.811 & 0.796 & 0.792 & 0.784 & 0.768 \\
Fisher & 0.679 & 0.705 & 0.682 & 0.678 & 0.672 & 0.658 \\
\addlinespace
\addlinespace\multicolumn{7}{l}{\(K=20\)}\\
Approx. Fisher (PCA) & 0.705 & 0.731 & 0.708 & 0.706 & 0.695 & 0.684 \\
Fisher-regression (All) & 0.714 & 0.718 & 0.717 & 0.715 & 0.708 & 0.714 \\
Fisher-regression (PCA) & 0.726 & 0.74 & 0.725 & 0.729 & 0.721 & 0.715 \\
Fisher & 0.671 & 0.685 & 0.669 & 0.672 & 0.668 & 0.66 \\
\addlinespace
\addlinespace\multicolumn{7}{l}{\(K=30\)}\\
Approx. Fisher (PCA) & 0.692 & 0.711 & 0.698 & 0.697 & 0.677 & 0.678 \\
Fisher-regression (All) & 0.481 & 0.437 & 0.466 & 0.492 & 0.498 & 0.514 \\
Fisher-regression (PCA) & 0.699 & 0.709 & 0.706 & 0.707 & 0.686 & 0.688 \\
Fisher & 0.671 & 0.682 & 0.676 & 0.675 & 0.66 & 0.663 \\
\addlinespace
\addlinespace & \multicolumn{6}{c}{Correlated covariates} \\
\cmidrule(lr){2-7}
\multicolumn{7}{l}{\(K=10\)}\\
Approx. Fisher (PCA) & 0.772 & 0.811 & 0.789 & 0.769 & 0.757 & 0.733 \\
Fisher-regression (All) & 0.851 & 0.859 & 0.855 & 0.854 & 0.848 & 0.84 \\
Fisher-regression (PCA) & 0.86 & 0.871 & 0.865 & 0.862 & 0.855 & 0.848 \\
Fisher & 0.678 & 0.698 & 0.684 & 0.677 & 0.673 & 0.66 \\
\addlinespace
\addlinespace\multicolumn{7}{l}{\(K=20\)}\\
Approx. Fisher (PCA) & 0.737 & 0.766 & 0.746 & 0.736 & 0.727 & 0.711 \\
Fisher-regression (All) & 0.715 & 0.717 & 0.719 & 0.718 & 0.713 & 0.709 \\
Fisher-regression (PCA) & 0.787 & 0.8 & 0.794 & 0.787 & 0.781 & 0.775 \\
Fisher & 0.678 & 0.694 & 0.681 & 0.677 & 0.672 & 0.668 \\
\addlinespace
\addlinespace\multicolumn{7}{l}{\(K=30\)}\\
Approx. Fisher (PCA) & 0.719 & 0.736 & 0.723 & 0.718 & 0.718 & 0.701 \\
Fisher-regression (All) & 0.487 & 0.442 & 0.478 & 0.493 & 0.508 & 0.517 \\
Fisher-regression (PCA) & 0.749 & 0.76 & 0.753 & 0.746 & 0.749 & 0.738 \\
Fisher & 0.679 & 0.687 & 0.681 & 0.678 & 0.678 & 0.669 \\
\addlinespace
\addlinespace\bottomrule
\end{tabular*}
    \begin{tablenotes}
	    \item[] \footnotesize Note: The table shows the power from of a test of the sharp null at five percent significance level, with $\tau=1$. The first column shows the unconditional power, whereas the next five shows the power for each quintile of the Mahalanobis distance, $M_{\bm\Delta}$. For each value of $K$, 1,000 samples are drawn with 100 assignment vectors selected for each sample. The sample size is set to 50 and the following parameters are used: $n_s=1000$, $n_f=20$ and $\bar{\delta}=0.01$.
	\end{tablenotes}
\end{threeparttable}
\end{table}

Conditional on the quintiles of the Mahalanobis distance, we generally see the same pattern as in Table \ref{tab:power_homo}: the power is decreasing in the Mahalanobis distance. Our approximate Fisher-test has a steeper drop-off in power as the distance increases, likely because its size-distortion is smaller for large distances (c.f. Table \ref{tab:size_homo_fisher}; power is not size-adjusted).

Overall, we conclude that while the approximate Fisher-test may not be quite as powerful as regression-based Fisher-tests, it is closer to maintaining correct size conditional on observed imbalances in the covariates. In addition, with a large set of covariates, the suggested algorithm of reducing the number of covariates with the use of principal components substantially increases power of either the approximate Fisher-test or the principal component regression-based Fisher-test compared to the standard regression-based Fisher-test.

\section{Heterogeneous treatment effects}
We now turn to the study of heterogeneous treatment effects. To do so, we consider the following two linear projections:
\begin{align}
	Y_i(0) &= \alpha_0 + \mathbf{z}_i'\bm\upbeta_0 + \varepsilon_{0i} \\
	Y_i(1) &= \alpha_1 + \mathbf{z}_i'\bm\upbeta_1 + \varepsilon_{1i},
\end{align}
with the estimand of interest--the sample average treatment effect--being
\begin{equation}
	\tau = \frac{1}{n}\sum_{i=1}^n (Y_i(1)-Y_i(0)) = \alpha_1-\alpha_0 + \mathbf{\overline{z}}'(\bm\upbeta_1-\bm\upbeta_0),
\end{equation}
as $\overline\varepsilon_{0}=\overline\varepsilon_{1}=0$ by construction. By demeaning the linear projections and interacting with the treatment indicator, $W_i$, we can write the observed outcome as
\begin{equation}
	Y_i = \alpha_0^* + (\mathbf{z}_i-\overline{\mathbf{z}})'\bm\upbeta_0 + W_i\tau + W_i(\mathbf{z}_i-\overline{\mathbf{z}})'\bm\uprho + \eta_i, \label{eq:y_interact}
\end{equation}
where $\alpha_0^*=\alpha_0+\mathbf{\overline{z}}'\bm\upbeta_0$, $\eta_i=\varepsilon_{0i}+W_i( \varepsilon_{1i}- \varepsilon_{0i})$ and $\bm\uprho=(\bm\upbeta_1-\bm\upbeta_0)$. The difference-in-means estimator can be written as
\begin{equation}
 	\widehat{\tau }_{DM}=\overline{Y}_{1}-\overline{Y}_{0}=\tau+\bm\Delta^{\prime }\bm{\zeta}+\overline{%
 \varepsilon }_{11}-\overline{\varepsilon }_{00},
\end{equation}
where $\bm{\zeta} = \frac{n_1}{n}\bm\upbeta_0 + \frac{n_0}{n}\bm\upbeta_1$ and $\overline{ \varepsilon }_{11}$ and $\overline{ \varepsilon }_{00}$ are the respective averages of $\varepsilon_1$ and $\varepsilon_{0}$ in the treatment and control groups. Analogous to the case with homogeneous treatment effects, we have
\begin{equation}
E_{\mathcal{W}_{\bm{\Delta}} }(\widehat{\tau }_{DM}) =\tau +\bm{\Delta}^{\prime }\bm{\zeta} +E_{\mathcal{W}_{\bm{\Delta}} }(\overline{%
 \varepsilon }_{11}-\overline{\varepsilon }_{00}).
\end{equation}%
Once again, we have conditional bias in the difference-in-means estimator for $\bm\Delta\neq \mathbf{0}$. Naturally, the conditional variance of the difference-in-means estimator is
\begin{equation}
V_{\mathcal{W}_{\bm{\Delta}} }(\widehat{\tau }_{DM}) =V_{\mathcal{W}_{\bm{\Delta}} }(\overline{
 \varepsilon }_{11}-\overline{\varepsilon }_{00})
\end{equation}%

When it comes to the OLS estimator, equation \eqref{eq:y_interact} suggests that to properly deal with the case of heterogeneous treatment effects, all covariates should be demeaned and included both by themselves as well as interacted with the treatment indicator. The coefficient in front of the treatment indicator by itself is then an estimator for $\tau$. We can include all the control variables, including interactions in a $n\times 2K$ matrix $\mathbf{X}=\left[\begin{array}{cc}
	\widetilde{\mathbf{Z}} & \mathbf{Q}
\end{array}\right]$ with the $i$th row equaling $\mathbf{x}_i=\left[\begin{array}{cc}\mathbf{z}_i-\overline{\mathbf{z}} & (\mathbf{z}_i-\overline{\mathbf{z}}) W_i \end{array}\right]$. Let $\widetilde{\mathbf{X}}=\left[\begin{array}{cc}
	\widetilde{\mathbf{Z}} & \widetilde{\mathbf{Q}}
\end{array}\right]=\mathbf{X}-\mathbf{\overline{x}}$, $\widetilde{\bm\upeta} = \bm\upeta - \overline{\bm\upeta}$, $\widetilde{\mathbf{M}}_{x}=\mathbf{I}-\widetilde{%
\mathbf{X}}\mathbf{(\widetilde{\mathbf{X}}}^{\prime }\widetilde{\mathbf{X}}%
\mathbf{)}^{-1}\widetilde{\mathbf{X}}^{\prime }$ and $\bm\Theta = \left[\begin{array}{cc}\bm\upbeta_0' & \bm\uprho' \end{array}\right]'$, the OLS estimator of $\tau$, $\hat\tau_x$, can then be written as
\begin{equation}
	\widehat{\tau }_{x}=\mathbf{(\widetilde{\mathbf{W}}^{\prime }\widetilde{%
	\mathbf{M}}}_x\widetilde{\mathbf{W}}\mathbf{)}^{-1}\widetilde{\mathbf{W}}%
	^{\prime }\widetilde{\mathbf{M}}_x \widetilde{\mathbf{Y}},
\end{equation}
where
\begin{equation}
	\widetilde{\mathbf{Y}} = \widetilde{\mathbf{X}}\bm\Theta + \widetilde{\mathbf{W}} \tau + \widetilde{\bm\upeta}.
\end{equation}
Because $\widetilde{\mathbf{W}}^{\prime }\widetilde{\mathbf{M}}_x\widetilde{\mathbf{X}}\bm\Theta = 0$ and $\widetilde{\mathbf{W}}%
	^{\prime }\widetilde{\mathbf{M}}_x  \widetilde{\bm\upeta} = \widetilde{\mathbf{W}}%
	^{\prime }\widetilde{\mathbf{M}}_x  \bm\upeta$, we have
\begin{equation}
	\widehat{\tau }_{x}=\tau + \mathbf{(\widetilde{\mathbf{W}}^{\prime }\widetilde{%
	\mathbf{M}}}_x\widetilde{\mathbf{W}}\mathbf{)}^{-1}\widetilde{\mathbf{W}}%
	^{\prime }\widetilde{\mathbf{M}}_x \bm{\upeta}.
\end{equation}
The numerator equals
\begin{align}
	\widetilde{\mathbf{W}}^{\prime }\widetilde{\mathbf{M}}_x \bm\upeta &= \widetilde{\mathbf{W}}^{\prime }\bm\upeta-\widetilde{\mathbf{W}}^{\prime }\widetilde{%
\mathbf{X}}\mathbf{(\widetilde{\mathbf{X}}}^{\prime }\widetilde{\mathbf{X}}%
\mathbf{)}^{-1}\widetilde{\mathbf{X}}^{\prime }\bm\upeta,
\end{align}
where
\begin{equation}
	\widetilde{\mathbf{W}}^{\prime }\bm\upeta = \frac{n_0n_1}{n}(\overline{\varepsilon}_{11}-\overline{\varepsilon}_{00}).
\end{equation}
Furthermore,
\begin{equation}
	\widetilde{\mathbf{W}}^{\prime }\widetilde{\mathbf{X}} =
	\left[
		\begin{array}{cc}
			\frac{n_0n_1}{n}\bm\Delta' & \frac{n_0^2n_1}{n^2}\bm\Delta'
		\end{array}
	\right],
\end{equation}
and
\begin{equation}
	\widetilde{\mathbf{X}}^{\prime }\bm\upeta =
	\left[
		\begin{array}{cc}
			\mathbf{T}_1' & \mathbf{T}_2'
		\end{array}
	\right]',
\end{equation}
where $\mathbf{T}_1 = \left[ \begin{array}{c}\sum_{i=1}^n \widetilde{\mathbf{z}}_{i}\eta_i \end{array}\right]'$ and $\mathbf{T}_2=\left[ \begin{array}{c}\sum_{i:W_i=1}^n \widetilde{\mathbf{z}}_{i}\varepsilon_{i1} - \frac{n_0n_1}{n} \overline{\eta} \bm\Delta'\end{array}\right]'$. Finally, $\left(  \widetilde{\mathbf{X}}'\widetilde{\mathbf{X}}  \right)^{-1}$ can be partitioned as
\begin{equation}
	(\widetilde{\mathbf{X}}'\widetilde{\mathbf{X}})^{-1} = \frac{1}{n-1}\left[
		\begin{array}{cc}
			\bm\Sigma_{11}^{-1}& \bm\Sigma_{12}^{-1} \\
			\bm\Sigma_{12}'^{-1} & \bm\Sigma_{22}^{-1}
		\end{array}
		\right]
\end{equation} 
We get
\begin{align}
	\widetilde{\mathbf{W}}^{\prime }\widetilde{%
	\mathbf{X}}\mathbf{(\widetilde{\mathbf{X}}}^{\prime }\widetilde{\mathbf{X}}%
	\mathbf{)}^{-1}\widetilde{\mathbf{X}}^{\prime }\bm\upeta &= \frac{1}{n-1}
	\left[
		\begin{array}{cc}
			\frac{n_0n_1}{n}\bm\Delta' & \frac{n_0^2n_1}{n^2}\bm\Delta'
		\end{array}
	\right]
	\left[
		\begin{array}{cc}
			\bm\Sigma_{11}^{-1}& \bm\Sigma_{12}^{-1} \\
			\bm\Sigma_{12}'^{-1} & \bm\Sigma_{22}^{-1}
		\end{array}
	\right]
	\left[
		\begin{array}{c}
			\mathbf{T}_1 \\
			\mathbf{T}_2
		\end{array}
	\right] \nonumber \\
	&= \frac{n_0n_1}{n(n-1)}\left(\left(\bm\Delta' \bm\Sigma_{11}^{-1} + \frac{n_0}{n}\bm\Delta' \bm\Sigma_{12}'^{-1}\right)\mathbf{T}_1 +
	\left(\bm\Delta' \bm\Sigma_{12}^{-1} + \frac{n_0}{n}\bm\Delta' \bm\Sigma_{22}^{-1}\right)\mathbf{T}_2\right).
\end{align}
The denominator of the OLS estimator:
\begin{align}
	\widetilde{\mathbf{W}}^{\prime }\widetilde{\mathbf{M}}_x\widetilde{\mathbf{W}}&=\mathbf{\widetilde{\mathbf{W}}}^{\prime }%
\mathbf{\widetilde{\mathbf{W}}}-\mathbf{\widetilde{\mathbf{W}}}^{\prime }%
\widetilde{\mathbf{X}}\mathbf{(\widetilde{\mathbf{X}}}^{\prime }%
\widetilde{\mathbf{X}}\mathbf{)}^{-1}\widetilde{\mathbf{X}}^{\prime }%
\mathbf{\widetilde{\mathbf{W}}} \nonumber \\
&=
\frac{n_0n_1}{n}\left(1-\frac{n_0n_1}{n(n-1)}\left(\bm\Delta' \bm\Sigma_{11}^{-1}\bm\Delta + \frac{n_0^2}{n^2}  \bm\Delta' \bm\Sigma_{22}^{-1}\bm\Delta + 2\frac{n_0}{n}  \bm\Delta' \bm\Sigma_{12}^{-1}\bm\Delta\right) \right).
\end{align}
The OLS estimator can the be written as
\begin{equation}
	\widehat{\tau }_{x} = \tau + \frac{\overline{\varepsilon}_{11}-\overline{\varepsilon}_{00} - \frac{1}{n-1}\left(\left(\bm\Delta' \bm\Sigma_{11}^{-1} + \frac{n_0}{n}\bm\Delta' \bm\Sigma_{12}'^{-1}\right)\mathbf{T}_1 +
	\left(\bm\Delta' \bm\Sigma_{12}^{-1} + \frac{n_0}{n}\bm\Delta' \bm\Sigma_{22}^{-1}\right)\mathbf{T}_2\right)}{1-\frac{n_0n_1}{n(n-1)}\left(\bm\Delta' \bm\Sigma_{11}^{-1}\bm\Delta + \frac{n_0^2}{n^2}  \bm\Delta' \bm\Sigma_{22}^{-1}\bm\Delta + 2\frac{n_0}{n}  \bm\Delta' \bm\Sigma_{12}^{-1}\bm\Delta\right)}.
\end{equation}
The difference from the case with homogeneous treatment effects is that the conditional bias of the OLS estimator no longer depends solely on $\overline{\varepsilon}_{11}-\overline{\varepsilon}_{00}$, but also on $\mathbf{T}_1$ and $\mathbf{T}_2$; in the homogeneous case, $\mathbf{Z}'\bm\upvarepsilon = \mathbf{0}$, whereas in the heterogeneous case, $\mathbf{Z}'\bm\upeta \neq \mathbf{0}$. The only time this second term disappears is when $\bm\Delta=\mathbf{0}$.

We can write the denominator in the expression above as $1 - \widetilde{M}_{\bm\Delta}/(n-1)$, where $\widetilde{M}_{\bm\Delta}$ is a weighted Mahalanobis distance of $\bm\Delta$ (technically, it is the Mahalanobis distance of $\overline{\mathbf{x}}_1-\overline{\mathbf{x}}_0$). The expectation of the OLS estimator is
\begin{multline}
	E_{\mathcal{W}_{\bm{\Delta}} }(\widehat{\tau }_x) = \tau + 
	\frac{E_{\mathcal{W}_{\bm{\Delta}} }(\overline{\varepsilon }_{11}-\overline{\varepsilon }_{00})}{1 - \widetilde{M}_{\bm\Delta}/(n-1)} - \\ \frac{\frac{1}{n-1}\left(\left(\bm\Delta' \bm\Sigma_{11}^{-1} + \frac{n_0}{n}\bm\Delta' \bm\Sigma_{12}'^{-1}\right)E_{\mathcal{W}_{\bm{\Delta}} }(\mathbf{T}_1) +
	\left(\bm\Delta' \bm\Sigma_{12}^{-1} + \frac{n_0}{n}\bm\Delta' \bm\Sigma_{22}^{-1}\right)E_{\mathcal{W}_{\bm{\Delta}} }(\mathbf{T}_2)\right)}{1 - \widetilde{M}_{\bm\Delta}/(n-1)}.
\end{multline}

\subsection{Regression-based inference}

\cite{freedman_regression_2008} discusses the extent to which randomization justifies regression adjustment in the Neyman model \citep{splawa-neyman_application_1990} and studies the asymptotic properties when the number of units in the experiment goes to infinity. He shows (i) that the OLS covariate adjustment estimator is, in general, biased (of order $1/n$), (ii) that the conventional OLS\ estimated standard errors estimator is inconsistent, and (iii) that, with unbalanced designs, the OLS estimator also could be less efficient than the difference-in-means estimator asymptotically.  However, \cite{lin_agnostic_2013} shows (i) that the Eicker-Huber-White standard error estimator \citep{Eicker_1967,Huber_1967, White_1980} is consistent or asymptotically conservative and (ii) that the OLS estimator from equation \eqref{eq:y_interact} is, asymptotically, at least as efficient as the difference-in-means estimator. This is the procedure for inference we use for the regression estimator in the simulations below. 

\subsubsection{Simulations}
To study the properties of the estimators with heterogeneous treatment effects in a simulation study, we generate data as
\begin{align}
	Y_i(0) &= \mathbf{z}_i\mathbf{b} + u_{0i}, \label{eq:dgp_y0} \\
	Y_i(1) &= \mathbf{z}_i\mathbf{b} + \gamma + u_{1i}. \label{eq:dgp_y1}
\end{align}
where $\mathbf{Z}\sim N(\mathbf{0}, \mathbf{I})$, $\mathbf{b}=\left[\begin{array}{cccc}
	\frac{1}{\sqrt{K}} & \frac{1}{\sqrt{K}} & \ldots & \frac{1}{\sqrt{K}} \\
\end{array}\right]$ and both $u_0$ and $u_1$ following standard normal distributions. In these simulations, the heterogeneity therefore comes solely from the differing errors. As before, for our algorithm, we let $\bar{\delta}=0.01$ and the sample size is set to $n=50$ with $n_0=n_1=25$. Because we can use up to $2K$ covariates in a regression (because of the interactions), we vary $K$ from 2 to 20 in steps of 1. $n_{\bar{\delta}}$ in Algorithm \ref{alg:ComponentSelection} is now initiated at $n_A/2$.

Figure \ref{fig:sim_results_2_20_hetero} shows the results for the MSE (calculated for $\tau$, the sample average treatment effect and not $\gamma$, the population average treatment effect). The result is very similar to that in Figure \ref{fig:sim_results_2_40} with the difference that our PCA estimator always perform as good or better than the OLS estimator.

\begin{figure}
	\captionsetup{position=top}
	\caption{MSE, heterogeneous treatment effect \label{fig:sim_results_2_20_hetero}}
	\includegraphics[width=\linewidth]{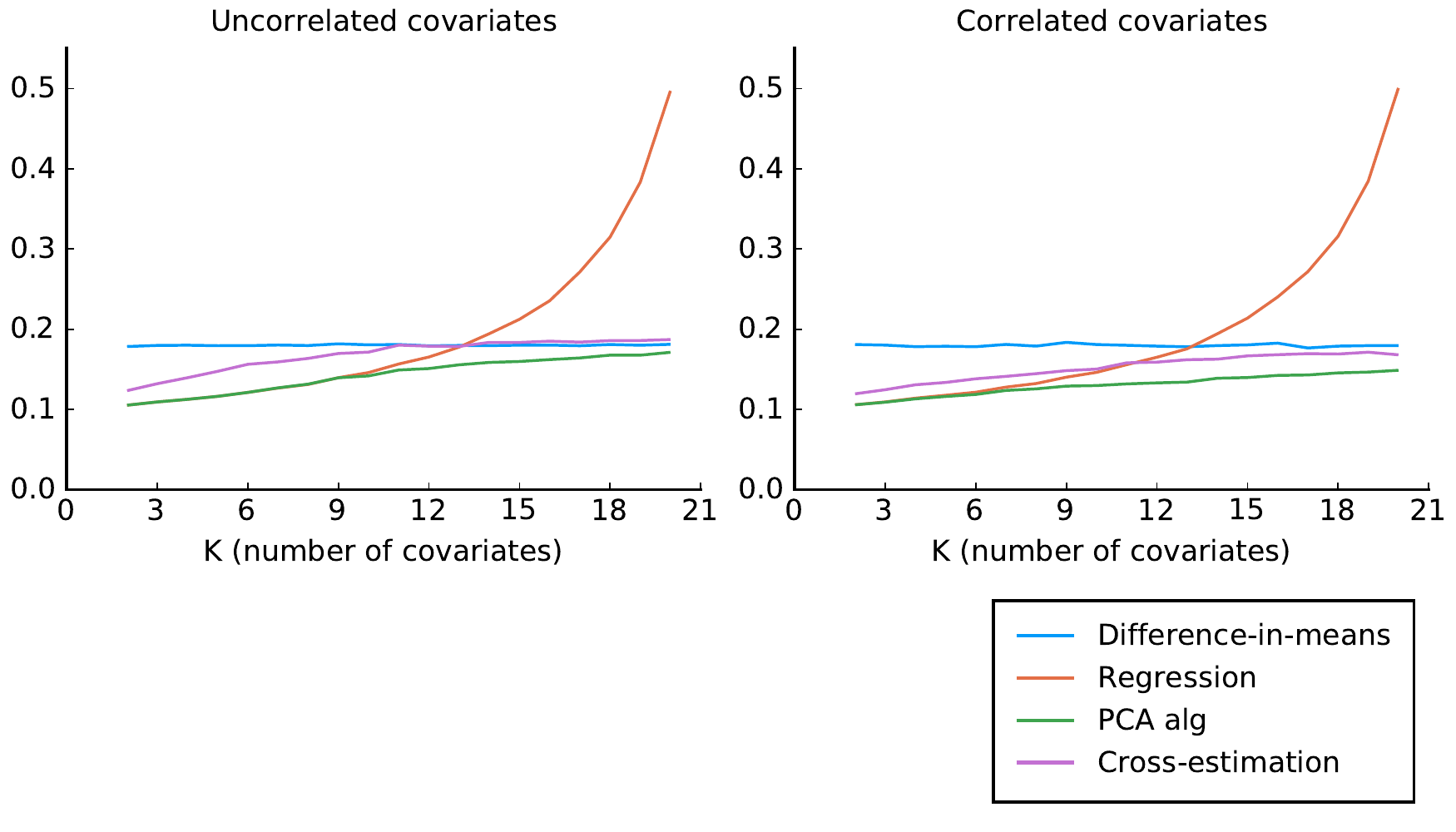}
	\floatfoot{\footnotesize Note: For each value of $K$, 1,000 samples are drawn with 10,000 assignment vectors selected for each sample. For the cross-estimation estimator, for computational time purposes, only 100 assignment vectors are selected for each sample. The sample size is set to 50 and $\gamma=0$.}
\end{figure}

Figure \ref{fig:mse_cond_M_hetero} shows the MSE conditional on the Mahalanobis distance in the same way as in Figure \ref{fig:mse_cond_M}. The OLS and PCA estimators are virtually identical when $K=5$, with the PCA estimator outperforming the OLS estimator for larger values of $K$. Overall, the conclusions are similar when treatment effects are heterogeneous as compared to when they are homogeneous.

\begin{figure}
	\captionsetup{position=top}
	\caption{MSE by percentile of the original Mahalanobis distance \label{fig:mse_cond_M_hetero}}
	\subfloat[Uncorrelated covariates]{
		\includegraphics[width=\linewidth]{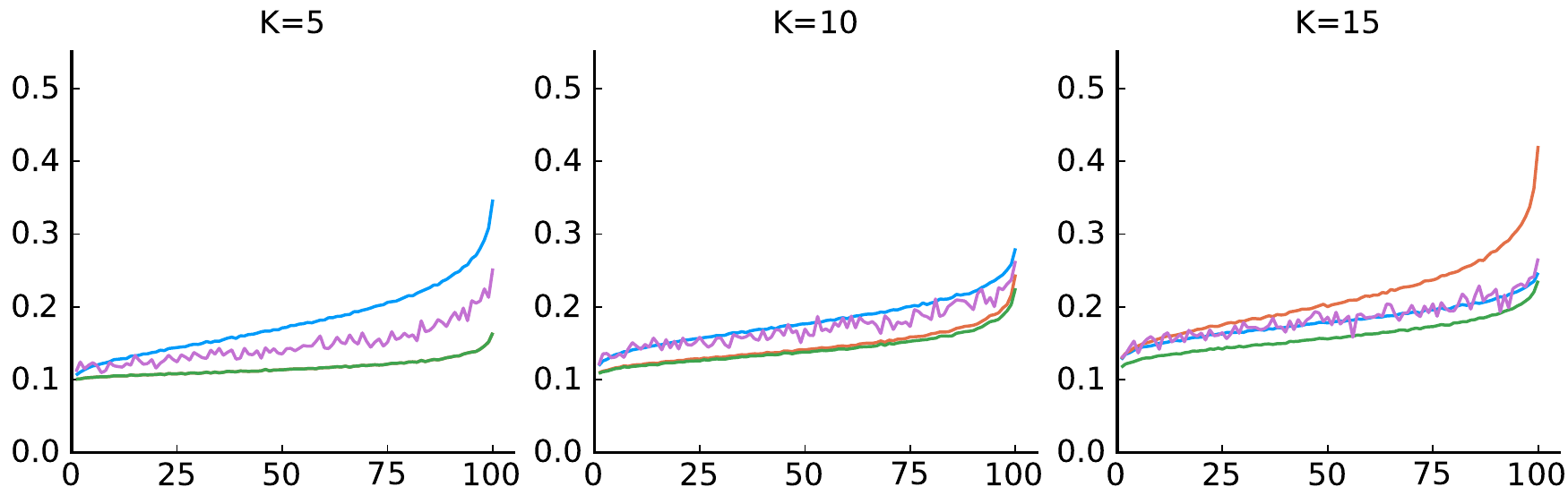}}\\
	\subfloat[Correlated covariates]{
		\includegraphics[width=\linewidth]{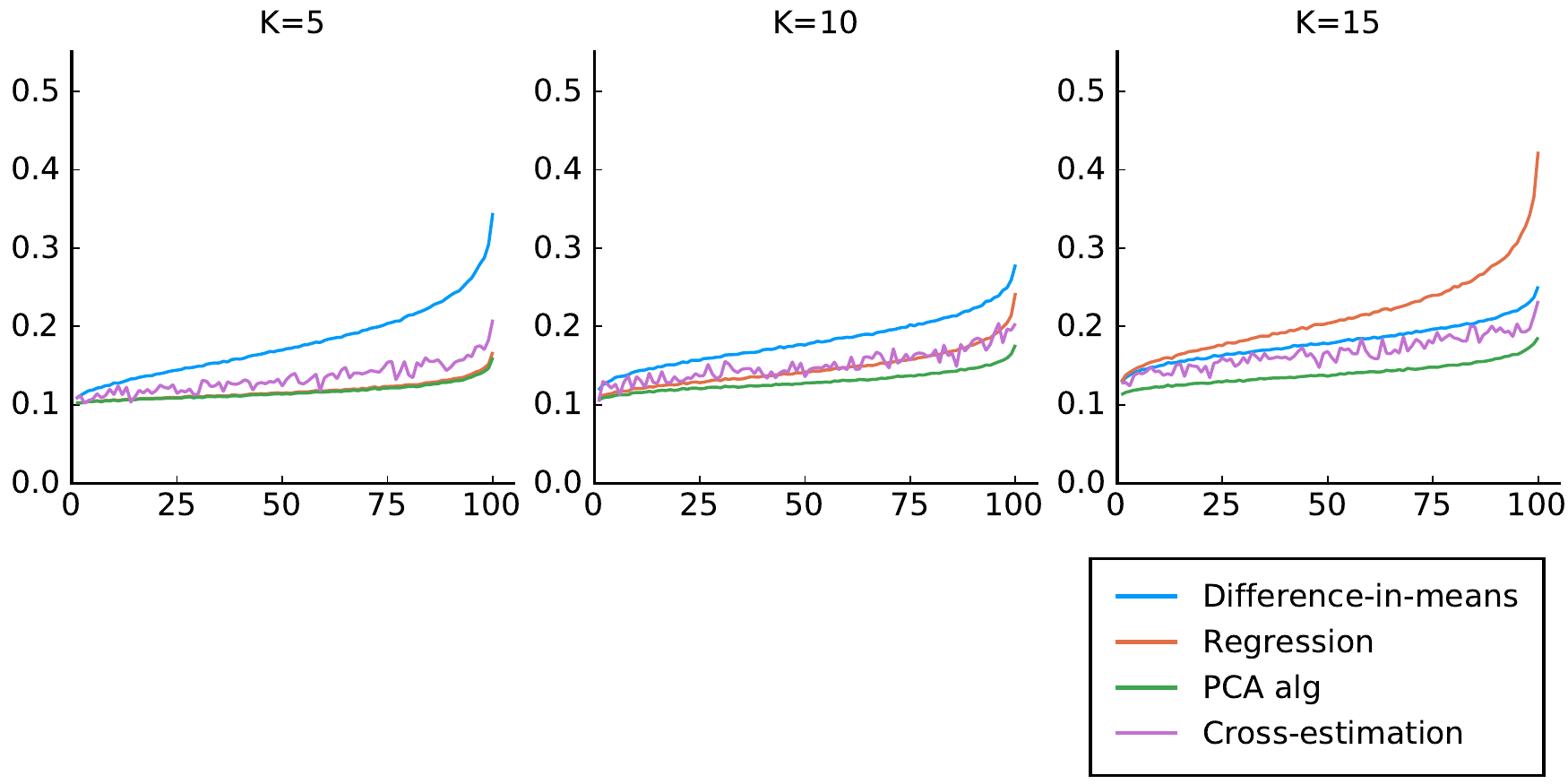}}
	\floatfoot{\footnotesize Note: For each value of $K$, 1,000 samples are drawn with 10,000 assignment vectors selected for each sample. For the cross-estimation estimator, for computational time purposes, only 100 assignment vectors are selected for each sample. The sample size is set to 50 and $\gamma=0$.}
\end{figure}

Table \ref{tab:size_hetero_hetero} shows the size of a test where the null is set to the sample average treatment effect at five percent significance level. We now see that no estimator gives correct size, with the difference-in-means, OLS and PCA estimators all typically being conservative, while the cross-estimation estimator continuous to overreject. However, the average size-distortion is in general quite small.

Conditional on the Mahalanobis distance, the same pattern as before is present: as the Mahalanobis distance increases, the rejection rate for all estimators increase. Different from the case with homogeneous treatment effects, this is true also for the OLS estimator.

\begin{table}[p!]
	\begin{threeparttable}
	    \caption{Size, heterogeneous effects \label{tab:size_hetero_hetero}}
	    \begin{tabular*}{\textwidth}{l @{\extracolsep{\fill}} cccccc}
\toprule
Quintiles & All & 1st & 2nd & 3rd & 4th & 5th \\
\midrule & \multicolumn{6}{c}{Uncorrelated covariates} \\
\cmidrule(lr){2-7}
\multicolumn{7}{l}{\(K=5\)}\\
Difference-in-means & 0.039 & 0.014 & 0.023 & 0.033 & 0.047 & 0.08 \\
Cross-estimation & 0.058 & 0.038 & 0.046 & 0.053 & 0.063 & 0.088 \\
Regression & 0.038 & 0.035 & 0.036 & 0.037 & 0.039 & 0.042 \\
PCA alg & 0.038 & 0.035 & 0.036 & 0.037 & 0.039 & 0.042 \\
\addlinespace
\addlinespace\multicolumn{7}{l}{\(K=10\)}\\
Difference-in-means & 0.04 & 0.02 & 0.029 & 0.037 & 0.046 & 0.066 \\
Cross-estimation & 0.063 & 0.044 & 0.051 & 0.06 & 0.068 & 0.091 \\
Regression & 0.044 & 0.04 & 0.042 & 0.044 & 0.046 & 0.05 \\
PCA alg & 0.043 & 0.036 & 0.039 & 0.041 & 0.045 & 0.053 \\
\addlinespace
\addlinespace\multicolumn{7}{l}{\(K=15\)}\\
Difference-in-means & 0.039 & 0.023 & 0.032 & 0.038 & 0.045 & 0.059 \\
Cross-estimation & 0.068 & 0.047 & 0.059 & 0.066 & 0.076 & 0.093 \\
Regression & 0.052 & 0.047 & 0.05 & 0.052 & 0.054 & 0.058 \\
PCA alg & 0.044 & 0.034 & 0.039 & 0.043 & 0.047 & 0.057 \\
\addlinespace
\addlinespace & \multicolumn{6}{c}{Correlated covariates} \\
\cmidrule(lr){2-7}
\multicolumn{7}{l}{\(K=5\)}\\
Difference-in-means & 0.039 & 0.016 & 0.023 & 0.033 & 0.046 & 0.077 \\
Cross-estimation & 0.054 & 0.039 & 0.044 & 0.051 & 0.061 & 0.075 \\
Regression & 0.038 & 0.035 & 0.036 & 0.037 & 0.039 & 0.042 \\
PCA alg & 0.037 & 0.034 & 0.036 & 0.037 & 0.039 & 0.041 \\
\addlinespace
\addlinespace\multicolumn{7}{l}{\(K=10\)}\\
Difference-in-means & 0.039 & 0.02 & 0.029 & 0.036 & 0.046 & 0.065 \\
Cross-estimation & 0.059 & 0.044 & 0.051 & 0.053 & 0.065 & 0.08 \\
Regression & 0.044 & 0.039 & 0.042 & 0.044 & 0.046 & 0.05 \\
PCA alg & 0.041 & 0.037 & 0.04 & 0.041 & 0.042 & 0.046 \\
\addlinespace
\addlinespace\multicolumn{7}{l}{\(K=15\)}\\
Difference-in-means & 0.039 & 0.024 & 0.031 & 0.038 & 0.045 & 0.058 \\
Cross-estimation & 0.066 & 0.047 & 0.057 & 0.063 & 0.073 & 0.088 \\
Regression & 0.052 & 0.046 & 0.049 & 0.052 & 0.054 & 0.057 \\
PCA alg & 0.042 & 0.037 & 0.04 & 0.041 & 0.044 & 0.049 \\
\addlinespace
\addlinespace\bottomrule
\end{tabular*}
	    \begin{tablenotes}
	    	\item[] \footnotesize Note: The table shows the size of a test where the null is the sample average treatment effect at five percent significance level. The first column shows the unconditional size, whereas the next five shows the size for each quintile of the Mahalanobis distance, $M_{\bm\Delta}$. For each value of $K$, 1,000 samples are drawn with 10,000 assignment vectors selected for each sample. For the cross-estimation estimator, for computational time purposes, only 100 assignment vectors are selected for each sample. The sample size is set to 50. For the regression-based estimators, the Eicker-Huber-White robust covariance matrix is used.
	    \end{tablenotes}
	\end{threeparttable}
\end{table}

Finally, Table \ref{tab:power_hetero} shows the result corresponding to Table \ref{tab:power_homo} in the homogeneous case. Because we are interested in studying the power, the null is now set to zero instead of the sample average treatment effect. Consistent with the results shown in Figure \ref{fig:sim_results_2_20_hetero}, the PCA estimator generally outperforms the other three estimators on average, as well as conditionally for small Mahalanobis distances. The cross-estimation estimator is generally slightly more powerful for large distances and roughly equally powerful for $K=15$. However, this can partly be attributed to the fact that the test rejects the null slightly too often.

\begin{table}[p!]
	\begin{threeparttable}
	    \caption{Power, heterogeneous effects \label{tab:power_hetero}}
	    \begin{tabular*}{\textwidth}{l @{\extracolsep{\fill}} cccccc}
\toprule
Quintiles & All & 1st & 2nd & 3rd & 4th & 5th \\
\midrule & \multicolumn{6}{c}{Uncorrelated covariates} \\
\cmidrule(lr){2-7}
\multicolumn{7}{l}{\(K=5\)}\\
Difference-in-means & 0.588 & 0.599 & 0.594 & 0.589 & 0.583 & 0.575 \\
Cross-estimation & 0.711 & 0.722 & 0.718 & 0.716 & 0.707 & 0.695 \\
Regression & 0.748 & 0.782 & 0.768 & 0.754 & 0.737 & 0.698 \\
PCA alg & 0.748 & 0.782 & 0.768 & 0.754 & 0.737 & 0.698 \\
\addlinespace
\addlinespace\multicolumn{7}{l}{\(K=10\)}\\
Difference-in-means & 0.589 & 0.597 & 0.592 & 0.589 & 0.585 & 0.58 \\
Cross-estimation & 0.674 & 0.683 & 0.681 & 0.68 & 0.666 & 0.659 \\
Regression & 0.667 & 0.733 & 0.7 & 0.674 & 0.644 & 0.587 \\
PCA alg & 0.68 & 0.729 & 0.702 & 0.682 & 0.661 & 0.625 \\
\addlinespace
\addlinespace\multicolumn{7}{l}{\(K=15\)}\\
Difference-in-means & 0.588 & 0.594 & 0.59 & 0.588 & 0.585 & 0.582 \\
Cross-estimation & 0.65 & 0.66 & 0.652 & 0.647 & 0.65 & 0.639 \\
Regression & 0.537 & 0.631 & 0.577 & 0.541 & 0.501 & 0.435 \\
PCA alg & 0.64 & 0.678 & 0.655 & 0.64 & 0.624 & 0.601 \\
\addlinespace
\addlinespace & \multicolumn{6}{c}{Correlated covariates} \\
\cmidrule(lr){2-7}
\multicolumn{7}{l}{\(K=5\)}\\
Difference-in-means & 0.592 & 0.6 & 0.596 & 0.593 & 0.589 & 0.583 \\
Cross-estimation & 0.742 & 0.755 & 0.75 & 0.745 & 0.738 & 0.723 \\
Regression & 0.753 & 0.787 & 0.772 & 0.76 & 0.742 & 0.703 \\
PCA alg & 0.757 & 0.788 & 0.774 & 0.763 & 0.747 & 0.712 \\
\addlinespace
\addlinespace\multicolumn{7}{l}{\(K=10\)}\\
Difference-in-means & 0.601 & 0.608 & 0.604 & 0.602 & 0.599 & 0.594 \\
Cross-estimation & 0.703 & 0.716 & 0.708 & 0.705 & 0.698 & 0.688 \\
Regression & 0.673 & 0.737 & 0.706 & 0.68 & 0.65 & 0.593 \\
PCA alg & 0.716 & 0.754 & 0.734 & 0.719 & 0.702 & 0.672 \\
\addlinespace
\addlinespace\multicolumn{7}{l}{\(K=15\)}\\
Difference-in-means & 0.597 & 0.603 & 0.6 & 0.597 & 0.595 & 0.591 \\
Cross-estimation & 0.683 & 0.691 & 0.69 & 0.68 & 0.677 & 0.675 \\
Regression & 0.537 & 0.632 & 0.578 & 0.541 & 0.501 & 0.434 \\
PCA alg & 0.691 & 0.726 & 0.706 & 0.692 & 0.678 & 0.654 \\
\addlinespace
\addlinespace\bottomrule
\end{tabular*}
	    \begin{tablenotes}
	    	\item[] \footnotesize Note: The table shows the power from of a test of $\tau=0$ at five percent significance level, with $\gamma=1$. The first column shows the unconditional power, whereas the next five shows the power for each quintile of the Mahalanobis distance, $M_{\bm\Delta}$. For each value of $K$, 1,000 samples are drawn with 10,000 assignment vectors selected for each sample. For the cross-estimation estimator, for computational time purposes, only 100 assignment vectors are selected for each sample. The sample size is set to 50. For the regression-based estimators, the Eicker-Huber-White robust covariance matrix is used.
	    \end{tablenotes}
	\end{threeparttable}
\end{table}

Overall, the conclusions from the simulations on homogeneous treatment effects carry over to the heterogeneous case. We find that the PCA estimator generally performs the best by having the smallest MSE and highest power, while being slightly conservative in terms of test size.

\subsection{Randomization-based inference, simulations}
To study inference with the Fisher tests when treatment effects are heterogeneous, we only study power, as the sharp null will always be false with heterogeneous effects. Data is generated according to equations \eqref{eq:dgp_y0} and \eqref{eq:dgp_y1} with $\gamma=1$. Results are shown in Table \ref{tab:power_hetero_fisher}. Once again, results are very similar to the case with homogeneous treatment effects (Table \ref{tab:power_homo_fisher}): The regression-based tests are in general the most powerful (except whan all fifteen covariates are used) followed by the approximate Fisher-test. For small Mahalanobis-distances, the differences between the approximate test and the regression-based tests are in general small, but as the distance gets larger, the differences increase, something that, at least partially, can be explained by the relatively larger size-distortion in the conditional regression-based tests (see Table \ref{tab:size_homo_fisher}).

\begin{table}[p!]
\begin{threeparttable}
    \caption{Power, heterogeneous effects, Fisher tests \label{tab:power_hetero_fisher}}
    \begin{tabular*}{\textwidth}{l @{\extracolsep{\fill}} cccccc}
\toprule
Quintiles & All & 1st & 2nd & 3rd & 4th & 5th \\
\midrule & \multicolumn{6}{c}{Uncorrelated covariates} \\
\cmidrule(lr){2-7}
\multicolumn{7}{l}{\(K=5\)}\\
Approx. Fisher (PCA) & 0.666 & 0.724 & 0.684 & 0.665 & 0.644 & 0.615 \\
Fisher-regression (All) & 0.744 & 0.75 & 0.744 & 0.747 & 0.742 & 0.738 \\
Fisher-regression (PCA) & 0.744 & 0.749 & 0.746 & 0.745 & 0.741 & 0.736 \\
Fisher & 0.572 & 0.584 & 0.578 & 0.574 & 0.564 & 0.561 \\
\addlinespace
\addlinespace\multicolumn{7}{l}{\(K=10\)}\\
Approx. Fisher (PCA) & 0.635 & 0.669 & 0.648 & 0.637 & 0.622 & 0.599 \\
Fisher-regression (All) & 0.701 & 0.708 & 0.701 & 0.699 & 0.698 & 0.697 \\
Fisher-regression (PCA) & 0.669 & 0.679 & 0.67 & 0.668 & 0.668 & 0.659 \\
Fisher & 0.586 & 0.592 & 0.59 & 0.588 & 0.582 & 0.577 \\
\addlinespace
\addlinespace\multicolumn{7}{l}{\(K=15\)}\\
Approx. Fisher (PCA) & 0.608 & 0.629 & 0.614 & 0.605 & 0.605 & 0.584 \\
Fisher-regression (All) & 0.614 & 0.614 & 0.611 & 0.62 & 0.611 & 0.615 \\
Fisher-regression (PCA) & 0.625 & 0.629 & 0.627 & 0.623 & 0.626 & 0.619 \\
Fisher & 0.573 & 0.581 & 0.569 & 0.573 & 0.574 & 0.568 \\
\addlinespace
\addlinespace & \multicolumn{6}{c}{Correlated covariates} \\
\cmidrule(lr){2-7}
\multicolumn{7}{l}{\(K=5\)}\\
Approx. Fisher (PCA) & 0.663 & 0.718 & 0.68 & 0.66 & 0.641 & 0.618 \\
Fisher-regression (All) & 0.736 & 0.742 & 0.738 & 0.736 & 0.731 & 0.731 \\
Fisher-regression (PCA) & 0.739 & 0.746 & 0.739 & 0.739 & 0.736 & 0.734 \\
Fisher & 0.577 & 0.584 & 0.576 & 0.578 & 0.575 & 0.572 \\
\addlinespace
\addlinespace\multicolumn{7}{l}{\(K=10\)}\\
Approx. Fisher (PCA) & 0.654 & 0.688 & 0.67 & 0.648 & 0.642 & 0.622 \\
Fisher-regression (All) & 0.684 & 0.693 & 0.689 & 0.68 & 0.68 & 0.679 \\
Fisher-regression (PCA) & 0.709 & 0.716 & 0.712 & 0.706 & 0.707 & 0.705 \\
Fisher & 0.585 & 0.59 & 0.588 & 0.582 & 0.59 & 0.575 \\
\addlinespace
\addlinespace\multicolumn{7}{l}{\(K=15\)}\\
Approx. Fisher (PCA) & 0.634 & 0.666 & 0.645 & 0.626 & 0.623 & 0.611 \\
Fisher-regression (All) & 0.614 & 0.612 & 0.615 & 0.617 & 0.612 & 0.616 \\
Fisher-regression (PCA) & 0.674 & 0.679 & 0.673 & 0.673 & 0.673 & 0.672 \\
Fisher & 0.574 & 0.582 & 0.571 & 0.567 & 0.573 & 0.576 \\
\addlinespace
\addlinespace\bottomrule
\end{tabular*}
    \begin{tablenotes}
	    \item[] \footnotesize Note: The table shows the power from of a test of $\tau=0$ at five percent significance level, with $\gamma=1$. The first column shows the unconditional power, whereas the next five shows the power for each quintile of the Mahalanobis distance, $M_{\bm\Delta}$. For each value of $K$, 1,000 samples are drawn with 100 assignment vectors selected for each sample. The sample size is set to 50 and the following parameters are used: $n_s=1000$, $n_f=20$ and $\bar{\delta}=0.01$.
	\end{tablenotes}
\end{threeparttable}
\end{table}

\section{Concluding discussion}
Randomized controlled trials are considered the gold standard for causal inferences as randomization of treatment guarantees that the difference-in-means estimator is an unbiased estimator of the average treatment effect under no model assumption. However, this unbiasedness only holds under randomization over all possible assignment vectors.

Indeed, in this paper we show that conditional on observed imbalances in covariates, the difference-in-means estimator is in general biased, with associated statistical tests having incorrect size. As researchers are generally encouraged to investigate whether covariates are balanced, this fact puts the practitioner in an awkward position: on the one hand, the estimator is unbiased over all possible randomizations; on the other hand, conditional on the differences actually observed, the estimator is most likely biased.

A solution to this problem is to condition on observed covariates in a regression model, and we show that the OLS estimator is approximately conditionally unbiased. On the other hand, \cite{athey_econometrics_2017} cautions against the use of the OLS estimator in analyzing randomized experiments as the OLS estimator was not developed with randomization inference in mind, resulting in a disconnect between the assumptions needed for regression and for randomized controlled trials. Specifically, they write that ``it is easy for the researcher using regression methods to go beyond analyses that are justified by randomization, and end up with analyses that rely on a difficult-to-assess mix of randomization assumptions, modeling assumptions, and large sample approximations''. Similarly, \cite{freedman_regression_2008} writes that ``Regression adjustments are often made to experimental data. Since randomization does not justify the models, almost anything can happen''. 

Furthermore---and as discussed in \cite{mutz_perils_2019}---if practitioners adjust for covariates only when they are imbalanced between treatment and control groups, the inference  will be compromised. A further problem also discussed in \cite{mutz_perils_2019} is that with many covariates, many different regression estimators are possible raising the concern of ``p-hacking''. With these objections in mind---and with the need to avoid adding all covariates in the regression model to avoid a high MSE---we develop an algorithm based on the principal components of the covariates and select only so many principal components that can be justified based on randomization inference, thereby alleviating the concerns raised by \cite{athey_econometrics_2017}, \cite{freedman_regression_2008} and \cite{ mutz_perils_2019}.


In addition, we also develop a version of Fisher's exact test where, instead of comparing the treatment effect estimate from the assignment vector actually chosen with all other assignment vectors, we only select a small subset of assignment vectors very similar to the chosen assignment vector to perform the test. With this approach, the test will have approximately correct size even conditional on observed covariate imbalances.

\bibliography{ref}
\clearpage
\appendix
\section*{Appendix}
\subsection*{Explicit formulas with a single dummy variable as covariate}
To derive explicit formulas for $E_{\mathcal{W}_{\bm{\Delta}} }(\overline{\varepsilon }_{1}-\overline{\varepsilon }_{0})$ and $V_{\mathcal{W}_{\bm{\Delta}} }(\overline{\varepsilon }_{1}-\overline{\varepsilon }_{0})$, we consider the case with a single covariate, $\mathbf{z}$, that can take the value 0 or 1, where $n_Z=\sum_{i=1}^n Z_i$. Let $\mathcal{Z}$ be the set of indexes for when $Z=1$ and let $\mathcal{Z}^c=\{1,\ldots,n\}\setminus \mathcal{Z}$ be the set of indexes for when $Z=0$.

\begin{corollary}\label{cor:sym_freq}
	For $\mathcal{W}_{\bm\Delta} \subseteq \mathcal{W}$, $w,w'\in\{0,1\}$ and $z,z'\in\{0,1\}$, it is the case that 
	\begin{enumerate}[label=(\roman*)]
		\item over all assignment vectors in $\mathcal{W}_{\bm\Delta}$, the number of times treatment status $w$ occur is the same for all units $i$ with $Z_i = z$.
		\item over all assignment vectors in $\mathcal{W}_{\bm\Delta}$, the number of times treatment status $w,w'$ occur is the same for all units $i,j$ with $Z_i=z$ and $Z_j=z'$.
	\end{enumerate}
\end{corollary}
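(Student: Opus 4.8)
The plan is to first translate the set $\mathcal{W}_{\bm\Delta}$ into a purely combinatorial object. With a single binary covariate we have $\overline{z}_1 = m/n_1$ and $\overline{z}_0 = (n_Z - m)/n_0$, where $m := |\{i : W_i = 1,\, Z_i = 1\}|$ is the number of treated units drawn from $\mathcal{Z}$. Since $\bm\Delta = \overline{z}_1 - \overline{z}_0 = m(1/n_1 + 1/n_0) - n_Z/n_0$ is a strictly increasing, hence injective, function of $m$, fixing $\bm\Delta$ is equivalent to fixing $m$. Thus $\mathcal{W}_{\bm\Delta}$ is exactly the set of assignment vectors that place $m$ treated units in $\mathcal{Z}$ and the remaining $n_1 - m$ treated units in $\mathcal{Z}^c$.

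Next I would introduce the symmetry group $S = S_{\mathcal{Z}} \times S_{\mathcal{Z}^c}$, the direct product of the symmetric groups permuting indices within $\mathcal{Z}$ and within $\mathcal{Z}^c$ separately, acting on assignment vectors by relabeling units via $(\sigma\mathbf{W})_i = W_{\sigma^{-1}(i)}$. The crucial observation is that every $\sigma \in S$ maps $\mathcal{W}_{\bm\Delta}$ bijectively onto itself: permuting within each $Z$-class changes neither $n_1$ nor $m$, so the defining condition is preserved, and $\sigma$ is invertible, so the induced map on $\mathcal{W}_{\bm\Delta}$ is a bijection.

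Both parts then follow from transitivity of this action. For (i), given two units $i,j$ with $Z_i = Z_j = z$, pick the transposition $\sigma \in S$ swapping $i$ and $j$, which is legal since they lie in the same class. Because $(\sigma\mathbf{W})_j = W_i$, the bijection $\mathbf{W}\mapsto\sigma\mathbf{W}$ carries $\{\mathbf{W}\in\mathcal{W}_{\bm\Delta} : W_i = w\}$ onto $\{\mathbf{W}\in\mathcal{W}_{\bm\Delta} : W_j = w\}$, so these sets have equal cardinality; that is, the occurrence count of status $w$ is the same for $i$ and $j$. For (ii), given pairs $(i,j)$ and $(k,l)$ with matching class profiles $Z_i = Z_k = z$ and $Z_j = Z_l = z'$, I would exhibit $\sigma \in S$ with $\sigma(i)=k$ and $\sigma(j)=l$: when $z \neq z'$ the two requirements live in different factors of $S$ and can be met independently, while when $z = z'$ they live in the same factor, where the symmetric group acts $2$-transitively on ordered pairs of distinct indices. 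The same bijection argument then equates the joint counts.

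The only real obstacle is the $z = z'$ case of (ii), where one must invoke $2$-transitivity rather than mere transitivity and check the edge conditions on class sizes (requiring $|\mathcal{Z}| \geq 2$ or $|\mathcal{Z}^c| \geq 2$ as appropriate). Everything else reduces to the action preserving $\mathcal{W}_{\bm\Delta}$. As a sanity check and alternative, both claims can be verified by direct enumeration: for $i \in \mathcal{Z}$ the number of $\mathbf{W}\in\mathcal{W}_{\bm\Delta}$ with $W_i = 1$ equals $\binom{n_Z - 1}{m-1}\binom{n-n_Z}{n_1-m}$, which is manifestly independent of which $i \in \mathcal{Z}$ is chosen, and analogous products of binomial coefficients dispatch the remaining cases of (i) and (ii).
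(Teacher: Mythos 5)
Your proof is correct and rests on the same idea as the paper's own argument: fixing $\bm\Delta$ is equivalent to fixing the number of treated units in each covariate class, and since $\mathcal{W}_{\bm\Delta}$ contains \emph{all} assignments with those within-class counts, units in the same class are exchangeable. The paper asserts this symmetry informally in a few sentences, whereas you make the same argument rigorous via the action of $S_{\mathcal{Z}}\times S_{\mathcal{Z}^c}$ (with the binomial-coefficient count as a cross-check), which adds precision but does not change the route.
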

\begin{proof}
	An assignment vector with a given $\bm\Delta=\overline{z}_1-\overline{z}_0$ implies that there is a given number of treated and control units with $z=1$ and $z=0$. Because the set $\mathcal{W}_{\bm\Delta}$ contains all possible such assignment vectors, a unit $i$ with $Z_i=z$ has to have treatment status $w$ the same number of times over all assignment vectors in the set $\mathcal{W}_{\bm\Delta}$ as any other unit $i'$ with $Z_{i'}=z$. Similarly, two units $i,j$ with $Z_i=z$ and $Z_j=z'$ must have treatment status $w,w'$ the same number of times as any other two units $i',j'$ with $Z_{i'}=z$ and $Z_{j'}=z'$.
\end{proof}

By Corollary \ref{cor:sym_freq}(i), we can write the relative frequency at which unit $i$ is treated over all assignments in $\mathcal{W}_{\bm{\Delta}}$ with two different constants:
\begin{align}
	f_{1\bm{\Delta}}:=&|\{\mathbf{W} \in \mathcal{W}_{\bm{\Delta}} : W_i = 1 \}| / |\mathcal{W}_{\bm{\Delta}}|, \quad \forall i \in \mathcal{Z}, \\
	f_{2\bm{\Delta}}:=&|\{\mathbf{W} \in \mathcal{W}_{\bm{\Delta}} : W_i = 1 \}| / |\mathcal{W}_{\bm{\Delta}}|, \quad \forall i \in \mathcal{Z}^c.
\end{align}
We have
\begin{equation}
	E_{\mathcal{W}_{\bm{\Delta}} }(\overline{\varepsilon}_1 - \overline{\varepsilon}_0)=
	\frac{1}{n_1}\left( f_{1\bm{\Delta}} \sum_{i\in \mathcal{Z}} \varepsilon_i + f_{2\bm{\Delta}} \sum_{i\in \mathcal{Z}^c} \varepsilon_i \right) -
	\frac{1}{n_0}\left( (1-f_{1\bm{\Delta}}) \sum_{i\in \mathcal{Z}} \varepsilon_i + (1-f_{2\bm{\Delta}}) \sum_{i\in \mathcal{Z}^c} \varepsilon_i \right).
\end{equation}
Because $\mathbf{z}$ is included in the linear projection, it is the case that
\begin{equation}
 	\sum_{i\in \mathcal{Z}} \varepsilon_i=\sum_{i\in \mathcal{Z}^c} \varepsilon_i=0 \Rightarrow E_{\mathcal{W}_{\bm{\Delta}} }(\overline{\varepsilon}_1 - \overline{\varepsilon}_0)=0. \label{eq:sum_ez}
\end{equation}
We can therefore write the conditional variance as
\begin{equation}
V_{\mathcal{W}_{\bm{\Delta}} }(\overline{\varepsilon}_1 - \overline{\varepsilon}_0)=   E_{\mathcal{W}_{\bm{\Delta}} }\left((\overline{%
\varepsilon }_{1}-\overline{\varepsilon }_{0})^2\right) = E_{\mathcal{W}_{\bm{\Delta}} }\left(\overline{\varepsilon }_{1}^2\right) + E_{\mathcal{W}_{\bm{\Delta}} }\left(\overline{\varepsilon }_{0}^2\right) - 2E_{\mathcal{W}_{\bm{\Delta}} }\left(\overline{\varepsilon }_{1}\overline{\varepsilon }_{0}\right).
\end{equation}
By Corollary \ref{cor:sym_freq}(ii), we can define the following constants:
\begin{align}
	f_{3\bm{\Delta}}:=&|\{\mathbf{W} \in \mathcal{W}_{\bm{\Delta}} : W_i = 1 \land W_j = 1 \}| / |\mathcal{W}_{\bm{\Delta}}|, \quad \forall i,j \in \mathcal{Z}, i \neq j, \\
	f_{4\bm{\Delta}}:=&|\{\mathbf{W} \in \mathcal{W}_{\bm{\Delta}} : W_i = 1 \land W_j = 1 \}| / |\mathcal{W}_{\bm{\Delta}}|, \quad \forall i,j \in \mathcal{Z}^c, i \neq j, \\
	f_{5\bm{\Delta}}:=&|\{\mathbf{W} \in \mathcal{W}_{\bm{\Delta}} : W_i = 1 \land W_j = 1 \}| / |\mathcal{W}_{\bm{\Delta}}|, \quad \forall i \in \mathcal{Z}, \forall j \in \mathcal{Z}^c, \\
	f_{6\bm{\Delta}}:=&|\{\mathbf{W} \in \mathcal{W}_{\bm{\Delta}} : W_i = 1 \land W_j = 0 \}| / |\mathcal{W}_{\bm{\Delta}}|, \quad \forall i,j \in \mathcal{Z}, i \neq j, \\
	f_{7\bm{\Delta}}:=&|\{\mathbf{W} \in \mathcal{W}_{\bm{\Delta}} : W_i = 1 \land W_j = 0 \}| / |\mathcal{W}_{\bm{\Delta}}|, \quad \forall i,j \in \mathcal{Z}^c, i \neq j, \\
	f_{8\bm{\Delta}}:=&|\{\mathbf{W} \in \mathcal{W}_{\bm{\Delta}} : W_i = 1 \land W_j = 0 \}| / |\mathcal{W}_{\bm{\Delta}}|, \quad \forall i \in \mathcal{Z}, \forall j \in \mathcal{Z}^c, \\
	f_{9\bm{\Delta}}:=&|\{\mathbf{W} \in \mathcal{W}_{\bm{\Delta}} : W_i = 1 \land W_j = 0 \}| / |\mathcal{W}_{\bm{\Delta}}|, \quad \forall i \in \mathcal{Z}^c, \forall j \in \mathcal{Z}.
\end{align}
We can write $E_{\mathcal{W}_{\bm{\Delta}} }\left(\overline{\varepsilon }_{1}^2\right)$ as
\begin{multline}
	E_{\mathcal{W}_{\bm{\Delta}} }\left(\overline{\varepsilon }_{1}^2\right) =
	\frac{1}{n_1^2} \left(
		f_{1\bm{\Delta}} \sum_{i\in \mathcal{Z}} \varepsilon_i^2 +
		f_{2\bm{\Delta}} \sum_{i\in \mathcal{Z}^c} \varepsilon_i^2 +
		f_{3\bm{\Delta}} \sum_{i\in \mathcal{Z}}\sum_{j\in \mathcal{Z}\setminus\{i\}} \varepsilon_i\varepsilon_j + \right. \\ \left.
		f_{4\bm{\Delta}} \sum_{i\in \mathcal{Z}^c}\sum_{j\in \mathcal{Z}^c\setminus\{i\}} \varepsilon_i\varepsilon_j +
		f_{5\bm{\Delta}} \sum_{i\in \mathcal{Z}}\sum_{j\in \mathcal{Z}^c} \varepsilon_i\varepsilon_j
	\right). \label{eq:ewe1}
\end{multline}
Equation \eqref{eq:sum_ez} implies that 
\begin{equation}
	\sum_{i\in \mathcal{Z}}\sum_{j\in \mathcal{Z}^c} \varepsilon_i\varepsilon_j=\sum_{i\in \mathcal{Z}}\sum_{j\in \mathcal{Z}} \varepsilon_i\varepsilon_j=\sum_{i\in \mathcal{Z}^c}\sum_{j\in \mathcal{Z}^c} \varepsilon_i\varepsilon_j = 0.
\end{equation}
Furthermore, the latter two sums can be broken down as sums of the ``diagonal'' and ``off-diagonal'' residuals which means that
\begin{equation}
	\sum_{i\in \mathcal{Z}} \varepsilon_i^2 = -\sum_{i\in \mathcal{Z}}\sum_{j\in \mathcal{Z}\setminus\{i\}} \varepsilon_i\varepsilon_j, \quad \sum_{i\in \mathcal{Z}^c} \varepsilon_i^2 = - \sum_{i\in \mathcal{Z}^c}\sum_{j\in \mathcal{Z}^c\setminus\{i\}} \varepsilon_i\varepsilon_j.
\end{equation}
Equation \eqref{eq:ewe1} can therefore be written as
\begin{equation}
	E_{\mathcal{W}_{\bm{\Delta}} }\left(\overline{\varepsilon }_{1}^2\right) =
	\frac{1}{n_1^2} \left(
		(f_{1\bm{\Delta}}-f_{3\bm{\Delta}}) \sum_{i\in \mathcal{Z}} \varepsilon_i^2 +
		(f_{2\bm{\Delta}}-f_{4\bm{\Delta}}) \sum_{i\in \mathcal{Z}^c} \varepsilon_i^2
	\right). \label{eq:e12}
\end{equation}
In the equation above, $f_{1\bm{\Delta}}=\Pr(W=1|Z=1,\bm\Delta)$ and $f_{3\bm{\Delta}}=\Pr(W_i=1 \cap W_j=1|\bm\Delta)$ for $i,j \in \mathcal{Z}, i \neq j$. To get an expression for $E_{\mathcal{W}_{\bm{\Delta}} }\left(\overline{\varepsilon }_{0}^2\right)$, we can use the fact that $\Pr(W_i=0 \cap W_j=0|\bm\Delta)=\Pr(W_i=0|\bm\Delta)+\Pr(W_j=0|\bm\Delta)-\Pr(W_i=0 \cup W_j=0|\bm\Delta)=2(1-f_{1\bm{\Delta}})-(1 - f_{3\bm{\Delta}})$, with the corresponding expression for $i,j \in\mathcal{Z}^c$ being $2(1-f_{2\bm{\Delta}})-(1 - f_{4\bm{\Delta}})$. We get
\begin{align}
	E_{\mathcal{W}_{\bm{\Delta}} }\left(\overline{\varepsilon }_{0}^2\right) &=
	\frac{1}{n_0^2} \left(
		(1 - f_{1\bm{\Delta}} - 2(1-f_{1\bm{\Delta}})+(1 - f_{3\bm{\Delta}}) \sum_{i\in \mathcal{Z}} \varepsilon_i^2 +
		(1 - f_{2\bm{\Delta}} - 2(1-f_{2\bm{\Delta}})+(1 - f_{4\bm{\Delta}}) \sum_{i\in \mathcal{Z}^c} \varepsilon_i^2\right) \nonumber \\
		&=
	\frac{1}{n_0^2} \left(
		(f_{1\bm{\Delta}}-f_{3\bm{\Delta}}) \sum_{i\in \mathcal{Z}} \varepsilon_i^2 +
		(f_{2\bm{\Delta}}-f_{4\bm{\Delta}}) \sum_{i\in \mathcal{Z}^c} \varepsilon_i^2
	\right). \label{eq:e02}
\end{align}
Finally, we have
\begin{align}
	E_{\mathcal{W}_{\bm{\Delta}} }\left(\overline{\varepsilon }_{0}\overline{\varepsilon }_{1}\right) &=
	\frac{1}{n_0n_1} \left(
		f_{6\bm{\Delta}}\sum_{i\in \mathcal{Z}}\sum_{j\in \mathcal{Z}\setminus\{i\}} \varepsilon_i\varepsilon_j +
		f_{7\bm{\Delta}}\sum_{i\in \mathcal{Z}^c}\sum_{j\in \mathcal{Z}^c\setminus\{i\}} \varepsilon_i\varepsilon_j +
		(f_{8\bm{\Delta}} + f_{9\bm{\Delta}})\sum_{i\in \mathcal{Z}}\sum_{j\in \mathcal{Z}^c} \varepsilon_i\varepsilon_j
	\right) \nonumber \\
	&=
	\frac{1}{n_0n_1} \left(
		-f_{6\bm{\Delta}}\sum_{i\in \mathcal{Z}} \varepsilon_i^2 -
		f_{7\bm{\Delta}}\sum_{i\in \mathcal{Z}^c} \varepsilon_i^2
	\right). \label{eq:e0e1}
\end{align}
Combining equations \eqref{eq:e12}, \eqref{eq:e02} and \eqref{eq:e0e1}, we get
\begin{multline}
	V_{\mathcal{W}_{\bm{\Delta}} }(\overline{\varepsilon}_1 - \overline{\varepsilon}_0) =
	\left(
		\frac{f_{1\bm{\Delta}} - f_{3\bm{\Delta}}}{n_1^2} +
		\frac{f_{1\bm{\Delta}} - f_{3\bm{\Delta}}}{n_0^2} +
		2\frac{f_{6\bm{\Delta}}}{n_0n_1}
	\right)
	\sum_{i\in \mathcal{Z}} \varepsilon_i^2 + \\
	\left(
		\frac{f_{2\bm{\Delta}} - f_{4\bm{\Delta}}}{n_1^2} +
		\frac{f_{2\bm{\Delta}} - f_{4\bm{\Delta}}}{n_0^2} +
		2\frac{f_{7\bm{\Delta}}}{n_0n_1}
	\right)
	\sum_{i\in \mathcal{Z}^c} \varepsilon_i^2 \label{eq:full_var_epsilon}
\end{multline}

Using $\Pr(Z=1)=n_Z/n$, $\Pr(W=1)=n_1/n$ and $\Pr(Z=1|W=1,\bm\Delta)=\overline{z}_1 = \bm\Delta + \overline{z}_0$ together with Bayes' rule and $\overline{z}_0=(n_Z-n_1\overline{z}_1)/n_0$, we get
\begin{equation}
	f_{1\bm{\Delta}} = \frac{n_1}{n} + \frac{n_0n_1}{nn_Z}\bm\Delta.
\end{equation}
As $f_{1\bm{\Delta}}$ is the probability that a unit with $Z=1$ is treated, $f_{3\bm{\Delta}}$ is the probability that two units with $Z=1$ both are treated. There are $n_Z(n_Z-1)$ combinations of units (twice that of $\binom{n_Z}{2}$ due to symmetry) with $Z=1$ out of which $n_Zf_{1\bm{\Delta}}(n_Zf_{1\bm{\Delta}}-1)$ are the case when both are treated. Therefore the probability of both being treated is
\begin{equation}
	f_{3\bm{\Delta}} = \frac{f_{1\bm{\Delta}}(n_Zf_{1\bm{\Delta}}-1)}{n_Z-1}.
\end{equation}
Similarly, the probability that for two units with $Z=1$, one is in the treatment group, and the other is in the control group is
\begin{equation}
	f_{6\bm{\Delta}} = \frac{n_Zf_{1\bm{\Delta}}(1-f_{1\bm{\Delta}})}{n_Z-1}.
\end{equation}
For the second part of equation \eqref{eq:full_var_epsilon}, it is the case that $f_{2\bm{\Delta}}=\frac{n_1}{n} - \frac{n_0n_1}{n(n-n_Z)}\bm\Delta$, with $f_{4\bm{\Delta}}$ and $f_{7\bm{\Delta}}$ following in the same way as above. Putting it all together and simplifying, we get
\begin{multline}
	V_{\mathcal{W}_{\bm{\Delta}} }(\overline{\varepsilon}_1 - \overline{\varepsilon}_0) =
	\left(
		\frac{n_Z}{n_0n_1(n_Z-1)} +
		\frac{n_0-n_1}{n_0n_1(n_Z-1)}\bm\Delta -
		\frac{1}{n_Z(n_Z-1)}\bm\Delta^2
	\right)
	\sum_{i\in \mathcal{Z}} \varepsilon_i^2 + \\
	\left(
		\frac{n-n_Z}{n_0n_1(n-n_Z-1)} +
		\frac{n_1-n_0}{n_0n_1(n-n_Z-1)}\bm\Delta -
		\frac{1}{(n-n_Z)(n-n_Z-1)}\bm\Delta^2
	\right)
	\sum_{i\in \mathcal{Z}^c} \varepsilon_i^2.
\end{multline}
For the special case with $n_0=n_1$ (balanced experiment), this expression simplifies to
\begin{multline}
	V_{\mathcal{W}_{\bm{\Delta}} }(\overline{\varepsilon}_1 - \overline{\varepsilon}_0) =
	\left(
		\frac{4n_Z}{n^2(n_Z-1)} -
		\frac{1}{n_Z(n_Z-1)}\bm\Delta^2
	\right)
	\sum_{i\in \mathcal{Z}} \varepsilon_i^2 + \\
	\left(
		\frac{4(n-n_Z)}{n^2(n-n_Z-1)} -
		\frac{1}{(n-n_Z)(n-n_Z-1)}\bm\Delta^2
	\right)
	\sum_{i\in \mathcal{Z}^c} \varepsilon_i^2.
\end{multline}
We see that in this case, the variance has its maximum value for $\bm\Delta=0$ and is symmetrically decreasing as the magnitude of $\bm\Delta$ increases.

\end{document}